\newcommand{\rem}[1]{}
\newcommand{\de}{{\rm d}}
\newcommand{\bx}{{\mathbf{x}}}
\newcommand{\bm}{{\mathbf{m}}}
\newcommand{\bF}{{\mathbf{F}}}
\newcommand{\bB}{{\mathbf{B}}}
\newcommand{\bu}{{\mathbf{u}}}
\newcommand{\bel}{\begin{equation}}
\newcommand{\ee}{\end{equation}}
\newcommand{\beq}{\begin{eqnarray}}
\newcommand{\eq}{\end{eqnarray}}
\newcommand{\eeq}{\end{eqnarray}}
\newcommand{\bit}{\begin{itemize}}
\newcommand{\eit}{\end{itemize}}
\newcommand{\ben}{\begin{enumerate}}
\newcommand{\een}{\end{enumerate}}
\newcommand{\bc}{\begin{center}}
\newcommand{\ec}{\end{center}}
\definecolor{Orange}{rgb}{1.000, .529, .000}
\definecolor{AntiqueWhite}{rgb}{ .804, .753, .690}
\definecolor{Violet}{cmyk}{0.79,0.88,0,0}
\definecolor{Plum}{cmyk}{0.50,1,0,0}
\definecolor{Periwinkle}{cmyk}{0.57,0.55,0,0}
\definecolor{ForestGreen}{cmyk}{0.91,0,0.88,0.12}
\definecolor{OliveGreen}{cmyk}{0.64,0,0.95,0.40}
\definecolor{BrickRed}{cmyk}{0,0.89,0.94,0.28}
\definecolor{DarkOrchid}{cmyk}{0.40,0.80,0.20,0}
\definecolor{Fuchsia}{cmyk}{0.47,0.91,0,0.08}
\definecolor{Mulberry}{cmyk}{0.34,0.90,0,0.02}
\definecolor{Maroon}{cmyk}{0,0.87,0.68,0.32}
\definecolor{Mahogany}{cmyk}{0,0.85,0.87,0.35}
\definecolor{RawSienna}{cmyk}{0,0.72,1,0.45}
\definecolor{YellowOrange}{cmyk}{0,0.42,1,0}
\definecolor{OrangeOrange}{cmyk}{0,0.62,1,0}
\definecolor{BurntOrange}{cmyk}{0,0.51,1,0}
\definecolor{Bittersweet}{cmyk}{0,0.75,1,0.24}
\definecolor{RedOrange}{cmyk}{0,0.77,0.87,0}
\definecolor{Sepia}{cmyk}{0,0.83,1,0.70}
\definecolor{Brown}{cmyk}{0,0.81,1,0.60}
\definecolor{Tan}{cmyk}{0.14,0.42,0.56,0}
\definecolor{darkorange}{cmyk}{.20,.50,.80,0}    % rgb .80 .50 .20 
\definecolor{lightorange}{cmyk}{.07,.37,.65,0}   % rgb .93 .63 .35 
\definecolor{darkpeagreen}{cmyk}{.50,.30,.50,0}  % .50 .70 .50 
\newcommand{\pone}{\par\vspace{.25cm}\noindent}
\newtheorem{theorem}{Theorem}
\newtheorem{lemma}[theorem]{Lemma}
\newtheorem{definition}{Definition}
\newtheorem{remark}{Remark}
\newenvironment{proof}[1][Proof]{\noindent\textbf{#1.} }{\ \rule{0.5em}{0.5em}}
\begin{document}

\title{
Multiscale Turbulence Models\\ Based on Convected Fluid Microstructure
}
\author{Darryl D Holm
\\ Mathematics Department
\\ Imperial College London
\\  and \\ 
Cesare Tronci
\\ Mathematics Department
\\ Surrey University}

\date{\it In honor of Peter Constantin's 60th birthday.}

\maketitle

\begin{abstract}
The Euler-Poincar\'e approach to complex fluids is used to derive multiscale equations for computationally modelling Euler flows as a basis for modelling turbulence. The model is  based on a \emph{kinematic sweeping ansatz} (KSA) which assumes that the mean fluid flow serves as a Lagrangian frame of motion for the fluctuation dynamics. Thus, we regard the motion of a fluid parcel on the computationally resolvable length scales as a moving Lagrange coordinate for the fluctuating (zero-mean) motion of fluid parcels at the unresolved scales. Even in the simplest 2-scale version on which we concentrate here, the contributions of the fluctuating motion under the KSA to the mean motion yields a system of equations that extends known results and appears to be suitable for modelling nonlinear backscatter (energy transfer from smaller to larger scales) in turbulence using multiscale methods.
\end{abstract}

\tableofcontents

\section{Multiscale approaches}

\subsection{Dealing with microstructure dynamics in turbulence}
The history and present state of the art of multiscale approaches to fluid turbulence modelling are recounted in a number of excellent sources, including \cite{EE03,Hou2005,PS2008,EfendHou2009,Tr2010}. For the present purpose of  modelling multiscale fluid turbulence, we mention Multiscale Finite Elements Methods \cite{EfendHou2009} and the Heterogeneous Multiscale Method \cite{E-etal}. 
Both are general methodologies for numerical computation of problems with multiple scales. The methods rely on an efficient coupling between the different macroscopic and microscopic physical models. The key to the efficiency of such an approach is the possibility that the microscale model may not need to be solved over the entire computational domain, but only over small selected regions near where data estimation is carried out. Examples of applications include complex fluids, micro-fluidics, solids, interface problems, stochastic problems, and statistically self-similar problems.

The present work applies the standard multiscale method to Euler's fluid equations, then combines the results with ideas from the geometric mechanics of complex fluids, in order to create a new two-scale model of ideal incompressible flow. This is accomplished by: (i) making a slow-fast spatial decomposition of the fluid velocity; (ii) performing Lagrangian averaging in Hamilton's principle using this decomposition; and then (iii) applying the Taylor hypothesis in assuming that fluctuations are convected by the mean flow. The last step treats the flow trajectories of the mean flow as Lagrangian coordinates for the dynamics of the fluid fluctuations.
The result is a two-scale extension of the Lagrangian-averaged Euler alpha equations of \cite{HoMaRa1998} which was the basis for the Lagrangian-averaged Navier-Stokes alpha (LANS-$\alpha$) equations of \cite{FoHoTi2001,FoHoTi2002}.

\subsection{Convection of microstructure}
Homogenization techniques were applied in \cite{MPP1985} to obtain an averaged equation for the large scale features of highly-oscillatory  solutions of the three-dimensional incompressible Euler or Navier-Stokes equations. The following initial value problem was treated:
\[
\partial_tu+ (u \cdot \nabla)u = -\nabla p,
\quad\hbox{with}\quad 
\nabla \cdot u = 0 
\]
and with highly-oscillatory initial data ($\epsilon\ll1$)
\[
u(x, 0) = U(x) +W\left(x, \frac{x}{\epsilon}\right).
\]
Multiscale expansions were constructed for both the velocity field and the pressure, under the important assumption that the \emph{microstructure is advected by the mean flow}. Under this assumption, the following multiscale expansion for the velocity field was constructed:
\[
u^\epsilon(x, t) = u(x, t) + w\left(\frac{\theta(x,t)}{\epsilon},\frac{t}{\epsilon},\frac{x}{\epsilon},x,t\right)
+ \epsilon  u_1\left(\frac{\theta(x,t)}{\epsilon},\frac{t}{\epsilon},\frac{x}{\epsilon},x,t\right)
+O\left(\epsilon^2\right)
.
\]
The pressure field $p^\epsilon$ was expanded similarly. This form of the solutions for the velocity and pressure fields were shown to be consistent with the fluctuation quantity $\theta$ being advected by the \emph{mean velocity} as a Lagrangian coordinate. Namely, 
\begin{equation}\label{Adv-microstructure}
\partial_t\theta + u\cdot \nabla \theta = 0
\,,\quad
\theta(x,0) = x
\,.
\end{equation}
The additional vector variable $\theta$ is the back-to-labels map, or inverse map for the three-dimensional incompressible Euler equations.  Modelling the effects of the rapid small scales on the slower large-scale solutions of the three-dimensional Euler and Navier-Stokes equations constitutes the closure problem in turbulence theory. The work in \cite{MPP1985}  provided some understanding of the interactions of the small scales with the large scales and it characterized  the back-to-labels map as a form of fluctuating microstructure attached to the mean flow. The closure problem for turbulence was not solved, however, because the solutions for the functions $u$ and $w$ of the rapidly oscillating variables turned out not to be unique. The uniqueness problem was addressed by imposing additional assumed  constraints that led to large-scale averaged equations that resembled the then-popular $k$-$\epsilon$ closure model of turbulence.

Convection of microstructure of the two- and three-dimensional incompressible Euler equations has also been studied from a related but different viewpoint from \cite{MPP1985} in a series of recent papers, culminating in \cite{HYR2008}. These papers used a new approach that does not assume that the fluctuations follow the characteristics of the mean flow. This new approach enabled the derivation of a well-posed cell problem which was used to obtain an effective large-scale average equation. However, the numerical solution still required solving for the back-to-labels map at each time step.  

The present work builds on the results of this previous research, and obtains systems of multiscale equations that are purely Eulerian and avoid the necessity of solving for the back-to-labels map. The approach is based on the Euler-Poincar\'e  theory of complex fluids developed during the past decade, see e.g. \cite{HT2009} and references therein, also in a series of other papers \cite{GBTRPH}. The complex fluids approach is based on geometric mechanics and on the application of reduction by symmetry to Hamilton's principle for ideal continua \cite{HMR1998}. The corresponding Hamiltonian approach to complex fluids was developed earlier by Dzyaloshinskii and Volovik in their famous paper \cite{DzVo1980}. Later, a Hamilton's principle approach was followed to establish the variational approach to various soft matter systems in \cite{GBTRPH}, where various complex fluids were studied in different contexts, including dissipative dynamics.

\subsection*{Plan of the paper and main results}
\begin{enumerate} [(i)]
\item
Section 2 discusses the two-scale fluid flow decomposition that we use and introduces Taylor's hypothesis as a kinematic sweeping ansatz (KSA).
\item
Section 3 formulates the two-scale model by applying geometric methods for Hamilton's principle that were borrowed from complex fluids theory. 
\item
Section 4 discusses conservation laws for total subgrid scale circulation and helicity.
\item
Section 5 develops resolved-scale models based on advection laws for subgrid scale quantities. These models are similar to the stretched subgrid scale vortex model of \cite{MP1997}.
\item
Section 6 discusses a subgrid scale fluid model with feedback between the two scales.
\item
Section 7 contains a summary and conclusions of the paper. 
\item
There are also five Appendices that explain details of proofs and derivations sketched in the body of the text. 
\end{enumerate}

\section{Flow decomposition and Kinematic Sweeping Ansatz}

\subsection{Decomposition of periodic vector functions}

\begin{lemma}[Decomposition Lemma \cite{HYR2008}]\label{decomp-Lemma}$\,$

Any periodic function $\mathbf{u}(\mathbf{x}):\mathbb{R}^3\to\mathbb{R}^3$ that admits a Fourier series expansion on the unit cube $[0, 1]^3$ may be decomposed into a sum of two periodic functions
\begin{align}
\mathbf{u}(\mathbf{x}) = \mathbf{u}_1(\mathbf{x}_1) + \mathbf{u}_2(\mathbf{x}_1,\mathbf{x}_2)
\quad\hbox{with}\quad
\mathbf{x}_1,\mathbf{x}_2\in[0, 1]^3
\label{decomp-lemma}
\end{align}
in which $\mathbf{u}_2$ has zero mean in $\mathbf{x}_2\in[0, 1]^3$. That is, $\int  \mathbf{u}_2(\mathbf{x}_1,\mathbf{x}_2)\,\de^3x_2=0$.
\end{lemma}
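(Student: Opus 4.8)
The plan is to exploit the Fourier hypothesis directly, since the mean-zero condition on $\mathbf{u}_2$ is nothing more than the requirement that $\mathbf{u}_2$ carry no constant (zero-wavenumber) mode in the variable $\mathbf{x}_2$. First I would expand the given periodic field in its Fourier series on the unit cube,
\begin{equation}
\mathbf{u}(\mathbf{x}) = \sum_{\mathbf{k}\in\mathbb{Z}^3}\widehat{\mathbf{u}}_{\mathbf{k}}\,e^{2\pi i\,\mathbf{k}\cdot\mathbf{x}},
\end{equation}
the sum converging in $L^2([0,1]^3)$ under the assumed expansion. The only analytic input needed beyond this is the orthogonality relation $\int_{[0,1]^3}e^{2\pi i\,\mathbf{k}\cdot\mathbf{x}_2}\,\de^3x_2=\delta_{\mathbf{k},\mathbf{0}}$, which is precisely the device that converts ``no zero mode'' into ``zero mean.''

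Next I would make the two-scale structure explicit by regarding $\mathbf{x}$ as carrying a resolved (slow) coordinate $\mathbf{x}_1$ and an unresolved (fast) coordinate $\mathbf{x}_2$, and by splitting the wavevector lattice $\mathbb{Z}^3 = K_1 \sqcup K_2$ into resolved wavenumbers $K_1$ (with $\mathbf{0}\in K_1$) and the complementary unresolved wavenumbers $K_2$. I would then set
\begin{equation}
\mathbf{u}_1(\mathbf{x}_1) = \sum_{\mathbf{k}\in K_1}\widehat{\mathbf{u}}_{\mathbf{k}}\,e^{2\pi i\,\mathbf{k}\cdot\mathbf{x}_1},
\qquad
\mathbf{u}_2(\mathbf{x}_1,\mathbf{x}_2) = \sum_{\mathbf{k}\in K_2}\widehat{\mathbf{u}}_{\mathbf{k}}\,e^{2\pi i\,\mathbf{k}\cdot\mathbf{x}_2}.
\end{equation}
Both pieces are manifestly periodic on $[0,1]^3$, and reconstituting on the diagonal $\mathbf{x}_1=\mathbf{x}_2=\mathbf{x}$ restores the full series, hence $\mathbf{u}$. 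Because $\mathbf{0}\notin K_2$, every surviving exponential in $\mathbf{u}_2$ has $\mathbf{k}\neq\mathbf{0}$, so integrating term by term against the orthogonality relation yields $\int_{[0,1]^3}\mathbf{u}_2\,\de^3x_2=0$, the claimed mean-zero property; passing the integral through the $L^2$-convergent sum is justified by continuity of $v\mapsto\int v\,\de^3x_2$ on $L^2$.

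Equivalently and more invariantly, once $\mathbf{u}$ is viewed as a function $U(\mathbf{x}_1,\mathbf{x}_2)$ of the two scales, I could simply define $\mathbf{u}_1(\mathbf{x}_1):=\int_{[0,1]^3}U(\mathbf{x}_1,\mathbf{x}_2)\,\de^3x_2$ and $\mathbf{u}_2:=U-\mathbf{u}_1$, so that the mean-zero condition holds by construction and by Fubini. I expect the substantive point here to be conceptual rather than any hard estimate: the lift from one variable to two is non-canonical, since the partition $K_1\sqcup K_2$ (equivalently, the choice of scale cutoff separating resolved from unresolved modes) is not unique, so the Lemma asserts only \emph{existence} of such a decomposition. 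The single place demanding care is the bookkeeping of the constant mode: $\widehat{\mathbf{u}}_{\mathbf{0}}$ must be assigned to $\mathbf{u}_1$, for otherwise $\mathbf{u}_2$ would retain a nonzero mean in $\mathbf{x}_2$ and the defining property would fail.
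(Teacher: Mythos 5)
Your Fourier splitting, the orthogonality argument, and the bookkeeping of the constant mode are all fine as far as they go, but they miss the actual content of the Lemma, which is visible in how the paper uses it: the identity is meant to hold under the two-scale substitution $(\mathbf{x}_1,\mathbf{x}_2)=(\mathbf{x},\mathbf{x}/\epsilon)$ with $\epsilon=1/N$, not on the diagonal $\mathbf{x}_1=\mathbf{x}_2=\mathbf{x}$. Your $\mathbf{u}_2(\mathbf{x}_1,\mathbf{x}_2)=\sum_{\mathbf{k}\in K_2}\widehat{\mathbf{u}}_{\mathbf{k}}e^{2\pi i\mathbf{k}\cdot\mathbf{x}_2}$ is independent of $\mathbf{x}_1$, and evaluated at the fast variable $\mathbf{x}_2=N\mathbf{x}$ it produces the modes $e^{2\pi iN\mathbf{k}\cdot\mathbf{x}}$ rather than the high modes $e^{2\pi i\mathbf{k}\cdot\mathbf{x}}$ of $\mathbf{u}$; so the reconstruction fails under the intended substitution, and succeeds on your diagonal only because there the second variable is a pure renaming, which makes the statement vacuous. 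The missing idea in the paper's proof is the coset decomposition of the high wavevectors, $\mathbf{k}=\mathbf{k}_1+N\mathbf{k}_2$ with $\mathbf{k}_1\in\Lambda_N$ and $\mathbf{k}_2\ne0$, which regroups the high-pass part as
\begin{equation*}
\sum_{\mathbf{k}\in\Lambda_N'}\widehat{\mathbf{u}}(\mathbf{k})\,e^{i\mathbf{k}\cdot\mathbf{x}}
=\sum_{\mathbf{k}_2\ne0}\Big(\sum_{\mathbf{k}_1\in\Lambda_N}\widehat{\mathbf{u}}(\mathbf{k}_1+N\mathbf{k}_2)\,e^{i\mathbf{k}_1\cdot\mathbf{x}}\Big)e^{i\mathbf{k}_2\cdot N\mathbf{x}}
=:\mathbf{u}_2(\mathbf{x},\mathbf{x}/\epsilon)\,,
\end{equation*}
so that the $\mathbf{x}_1$-dependence of $\mathbf{u}_2$ is a genuinely slowly varying envelope (containing only wavenumbers in $\Lambda_N$), while the entire $\mathbf{x}_2$-dependence sits in the factors $e^{i\mathbf{k}_2\cdot\mathbf{x}_2}$ with $\mathbf{k}_2\ne0$, whence the zero mean in $\mathbf{x}_2$. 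This envelope structure is not cosmetic: the rest of the paper differentiates $\mathbf{u}_2$ in $\mathbf{x}_1$ (the coupling term $\alpha_1^2|\nabla_1\mathbf{u}_2|^2$ in \eqref{BasicLag}, the Reynolds stress \eqref{Reynolds-stress}), and all of these would vanish identically for your $\mathbf{x}_1$-independent $\mathbf{u}_2$.

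Your ``more invariant'' alternative --- setting $\mathbf{u}_1(\mathbf{x}_1):=\int U(\mathbf{x}_1,\mathbf{x}_2)\,\de^3x_2$ and $\mathbf{u}_2:=U-\mathbf{u}_1$ --- is circular as a proof of the Lemma: it presupposes that $\mathbf{u}$ has already been lifted to a two-variable function $U$, and constructing that lift so that restriction to $\mathbf{x}_2=\mathbf{x}/\epsilon$ recovers $\mathbf{u}$ is precisely what is being asserted and what the coset regrouping supplies. (It is a useful consistency check only after the lift exists; note also that your $\mathbf{u}_1$ defined this way need not be a low-pass truncation of $\mathbf{u}$.) You are right on one conceptual point, which the paper also flags: the decomposition is non-canonical, since both $\mathbf{u}_1$ and $\mathbf{u}_2$ depend on the cutoff $N=1/\epsilon$. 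But the substance of the Lemma is existence of the decomposition compatible with the two-scale evaluation $(\mathbf{x},\mathbf{x}/\epsilon)$, not mere existence of some splitting with a spectator variable, and your construction does not deliver that.
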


The proof follows from manipulation of Fourier series and is provided in the present notation in Appendix \ref{lem1-proof}.
We  will refer to $\mathbf{u}_1$ and $\mathbf{u}_2$ in equation (\ref{decomp-lemma}) as the velocities of the mean flow and the fluctuations, respectively. The $\mathbf{x}_1$ dependence in $\mathbf{u}_2(\mathbf{x}_1,\mathbf{x}_2)$ may be regarded as the slowly varying envelope of the rapidly fluctuating component of the velocity. The fluid pressure has a similar decomposition. 
\bigskip

\paragraph{Potential for applications in numerics.}
We will use Lemma \ref{decomp-Lemma} to express Euler's fluid equations in terms of two scales $(\mathbf{x}_1,\mathbf{x}_2)=(\mathbf{x},\mathbf{x}/\epsilon)$, in which we will regard $\epsilon\in[0,1]$ as a fixed parameter. In a computational setting, for example, the value of $\epsilon$ could be assigned by the choice of grid size used to resolve the large-scale solution $u_1(\mathbf{x})$, while regarding the remainder $u_{2}(\mathbf{x},\mathbf{x}/\epsilon)$ as the unresolved, subgrid-scale part of the solution. Thus, $\epsilon=1/N$ in this case corresponds to the fractional length scale that one would like to resolve on a computational grid. The limit $\epsilon\to0$ would pick up more and more small-scale components of the solution, and $u_{2}(\mathbf{x},\mathbf{x}/\epsilon)$ would tend to zero in the limit. That is, letting $\epsilon\to0$ resolves more and more small scales into the solution. In that limit, the computation would become a direct numerical simulation which attempts to resolve all scales. However, that limit is not our objective. 

The challenge for us here is to introduce a computable fluid model that describes the effect of subgrid scales on the resolvable scales $\mathbf{x}_1$ at a fixed nonzero value of $\epsilon$. Our approach will be to use the hypothesis of \emph{convected fluid microstructure} \cite{MPP1985,HYR2008}. For this, we will regard the motion  $\mathbf{x}_1(t)$ of a fluid parcel on the resolvable length scale as a moving Lagrange coordinate for the motion $\mathbf{x}_2(\mathbf{x}_1(t),t)$ of fluid parcels at the unresolved scales. We will apply this assumption by using methods of geometric mechanics. After some kinematic considerations for convection of fluid microstructure, the result will be an Eulerian multiscale fluid model. 

\subsection{Lagrangian averaging: fluctuations are swept by the mean}

The main hypothesis of the present paper is that fluctuations are swept by the mean flow, but they are not slaved to the mean flow as in  Large Eddy Simulation (LES) turbulence modelling. In particular, we assume that the fluctuations are swept by the large-scale motion and they have zero mean \emph{in the Lagrangian frame moving with the large-scale velocity}. 

Our interpretation of fluctuations is that they undergo their own evolution, which in turn depends on the Lagrangian fluid parcel traveling with the mean flow. That is, at each mean Lagrangian label $\boldsymbol\psi(\mathbf{x}_1^{(0)})$, there corresponds another Lagrangian label $\boldsymbol\eta_{\mathbf{x}_1}(\mathbf{x}_2^{(0)})$  that is associated to fluctuation dynamics. More specifically, we introduce the following definition.

\begin{definition}[KSA: Kinematic Sweeping Ansatz]\label{KSA}$\,$

KSA: Fluctuations are swept by the mean. 
That is, while the Lagrangian label $\mathbf{x}_1^{(0)}$ for the mean flow is taken to its current position by the map
\[
\mathbf{x}_1=\boldsymbol\psi(\mathbf{x}_1^{(0)})
\,,
\]
the fluctuation label $\mathbf{x}_2^{(0)}$ transforms according to
\[
\mathbf{x}_2=\boldsymbol\eta_{\mathbf{x}_1}(\mathbf{x}_2^{(0)})
\,.
\]
The subscript notation in the above formula emphasizes that the Lagrangian map $\boldsymbol\eta$ for the fluctuations depends on the Eulerian position $\mathbf{x}_1$ of the mean flow parcel. The maps $\boldsymbol\psi$ and $\boldsymbol\eta$ also both depend explicitly on time. Moreover, both the mean flow and fluctuations are assumed to undergo incompressible dynamics, so the maps $\boldsymbol\psi$ and $\boldsymbol\eta$ each preserve their respective volume elements
\[
\de^3 {x}_1^{(0)}=\de^3{x}_1
\qquad\text{and}\qquad
\de^3 {x}_2^{(0)}=\de^3{x}_2
\,.\]
\end{definition}\bigskip

As a result of the KSA, the velocity of a fluctuation as it is swept along a Lagrangian mean trajectory $\mathbf{x}_1(t)$ may be written in the  form
\begin{equation}\label{Fluctuations}
\frac{\de\mathbf{x}_2}{\de t}
=
 \frac{\partial}{\partial t}\boldsymbol\eta_{\mathbf{x}_1}(\mathbf{x}_2^{(0)})
 +
 \mathbf{u}_1\cdot\nabla_1\boldsymbol\eta_{\mathbf{x}_1}(\mathbf{x}_2^{(0)})
\,,
\end{equation}
where $\mathbf{u}_1(\mathbf{x}_1)=\boldsymbol{\dot{\psi}}(\boldsymbol\psi^{-1}(\mathbf{x}_1))$ is the Eulerian mean velocity and $\nabla_1$ stands for $\partial/\partial{\bf x}_1$. 

\begin{remark}\rm
The main difference between the present two-scale sweeping approach and other approaches in the literature lies in the second material term $\mathbf{u}_1(\mathbf{x}_1)\cdot\nabla_1\boldsymbol\eta_{\mathbf{x}_1}$, which shows that the fluctuations are dragged by the mean flow in a Lagrangian sense, so that Lagrangian mean trajectories $\mathbf{x}_1(t)$ become Lagrangian variables for the fluctuation dynamics. 
\end{remark}

The fluctuations will be constrained to have zero mean. However, the question arises of how the mean should be taken. It is clear that the precise quantity possessing zero mean is the Eulerian fluctuation velocity $\mathbf{u}_2(\mathbf{x}_1,\mathbf{x}_2,t)$. However, one must ask in which sense the latter has constant zero mean. For example, under purely Eulerian reasoning, one might be tempted to set $\partial_t\!\int\!\mathbf{u}_2(\mathbf{x}_1,\mathbf{x}_2,t)\,\de^3\mathbf{x}_2=0$ so that the mean of the fluctuations would remain constant in time at a given point $\mathbf{x}_1$. On the other hand, if the fluctuations are swept along by the mean flow, then one must write the Eulerian version of the Lagrangian time derivative as
\begin{align}
\left(\frac{\partial}{\partial t}+\mathbf{u}_1\cdot\nabla_1\right)
\!\int\!\mathbf{u}_2(\mathbf{x}_1,\mathbf{x}_2,t)\,\de^3{x}_2=0
\,,
\label{mean-preserv}
\end{align}
so that the mean of the fluctuation velocity remains constant \emph{along the Lagrangian mean paths of the $\mathbf{u}_1$-flow}. The above relation is the second main ingredient of our approach. While the basic assumption of a mean flow sweeping fluctuations in a Lagrangian sense is taken as an underlying hypothesis, the transport equation for the mean of the fluctuation will arise as a consequence of our treatment.

\subsection{The kinematic model}

Applying Lemma \ref{decomp-Lemma} and the KSA directly to Euler's fluid equation, and setting terms in $\mathbf{x}_2$ separately to zero yields the following equations of motion for $\mathbf{u}_1$ and $\mathbf{u}_2$,
\begin{align}
&\frac{\partial}{\partial t}\mathbf{u}_1 + ( \mathbf{u}_1\cdot\nabla_1)\mathbf{u}_1
=
-
\nabla_1\mathsf{p}_1(\mathbf{x}_1)
\,,\qquad
\nabla_1\cdot\mathbf{u}_1 =0\,,
\label{BigWhirls-mot}\\
&
\frac{\partial}{\partial t}\mathbf{u}_2 + ( \mathbf{u}_1\cdot\nabla_1)\mathbf{u}_2
+
 ( \mathbf{u}_2\cdot\nabla_2)\mathbf{u}_2
=
-\nabla_2\mathsf{p}_2(\mathbf{x}_1,\mathbf{x}_2)
\,,\qquad
\nabla_2\cdot\mathbf{u}_2 =0\,,
\label{SmallWhirls-mot}
\end{align}
in which the pressures $p_1$ and $p_2$ are determined by preservation of incompressibility of the velocities $\mathbf{u}_2$ and $\mathbf{u}_2$, respectively.
As we shall see later, the above equations also follow naturally by applying Lemma \ref{decomp-Lemma} and the KSA to Hamilton's principle.
The first equation (\ref{BigWhirls-mot}) yields Euler's fluid equation for the mean velocity $\mathbf{u}_1$, while the fluctuation velocity $\mathbf{u}_2$ is transported by the term $(\mathbf{u}_1\cdot\nabla_1){\mathbf{u}_2}$ and also undergoes its own nonlinear Euler dynamics, given by the term $( \mathbf{u}_2\cdot\nabla_2)\mathbf{u}_2$ in the second equation (\ref{SmallWhirls-mot}). 

Equation (\ref{SmallWhirls-mot}) preserves the mean obtained by integration over $\de^3x_2$. Moreover,  integration of equation (\ref{SmallWhirls-mot}) over $\de^3x_1$ produces an averaged version of Euler's equation in the form with $\langle\,\cdot\,\rangle_1 = \int (\,\cdot\,)\,\de^3x_1$
\[
\frac{\partial}{\partial t}\langle\mathbf{u}_2\rangle_1 + \nabla_2\,\cdot \langle\mathbf{u}_2\otimes\mathbf{u}_2\rangle_1 = -\nabla_2\langle\,p_2\,\rangle_1
\,.
\]
Consequently, $\frac{d}{dt}\int\langle\mathbf{u}_2\rangle_1\,\de^3x_2=0$, which is the statement of preservation of the zero-mean condition for the Eulerian fluctuation velocity $\mathbf{u}_2(\mathbf{x}_1,\mathbf{x}_2,t)$.

\paragraph{Kinetic energy conservation.}
The total kinetic energy of a fluid flow is the $L^2$ norm of its velocity $\|\mathbf{u}\|^2_{L^2}$ on the domain of flow, which decomposes according to Lemma \ref{decomp-Lemma} into 
\begin{equation}
\|\mathbf{u}\|^2_{L^2} 
= \int |\mathbf{u}_1(\mathbf{x})+\mathbf{u}_2(\mathbf{x},\mathbf{x}/\epsilon)|^2 \de^3x
\,.
\label{KinKE}
\end{equation}
One takes the average over the rapid variations by integrating over $\mathbf{x}/\epsilon=:\mathbf{x}_2$ while holding  $\mathbf{x}=:\mathbf{x}_1$ fixed. This yields, 
\begin{equation}
\left\langle\|\mathbf{u}\|^2_{L^2} \right\rangle_2
= \int |\mathbf{u}_1(\mathbf{x}_1)|^2\de^3x_1
+ \frac{1}{V_2}\int |\mathbf{u}_2(\mathbf{x}_1,\mathbf{x}_2)|^2\de^3x_1\, \de^3x_2
=
\|\mathbf{u}_1\|^2_{L^2} + \left\langle\|\mathbf{u}_2\|^2_{L^2} \right\rangle_2
\,,
\label{KinKEavg}
\end{equation}
where $V_2=1$ is the volume of the domain $\mathbf{x}_2\in[0, 1]^3$ and the zero-mean relation $\int  \mathbf{u}_2(\mathbf{x}_1,\mathbf{x}_2)\,\de^3x_2=0$ has been used.
Thus, the mean total fluid kinetic energy decomposes into the sum of the square of the $L^2$ metric of the velocity $\mathbf{u}_1$ and the mean-square $L^2$ metric of the velocity $\mathbf{u}_2$. 

\begin{remark}\rm
The two kinetic energy norms in (\ref{KinKE}) are conserved \emph{separately} by the system (\ref{BigWhirls-mot})--(\ref{SmallWhirls-mot}).
\end{remark}

\begin{remark}[Relation to the alpha-models]\rm$\,$\\
The treatment so far mimics the treatment of Lagrangian averages of the WKB decomposition in \cite{GjHo1996}. That  approach led to the Lagrangian Averaged Navier-Stokes alpha model of turbulence \cite{CFHOTW1998,CFHOTW1999} in which Taylor's hypothesis \cite{Ho2005} was invoked as a closure, by imposing that small excitations $k\alpha>1$ evolve by being swept by the larger scales $k\alpha<1$, under which the nonlinearity of the smaller scales is ignored. Here, that assumption has been \emph{relaxed} as in equation (\ref{SmallWhirls-mot}), to allow for the smaller scales to evolve under their own nonlinearity, \emph{relative to} the motion of the larger scales, whose flow trajectories are treated as Lagrangian coordinates for the smaller scales. In particular, this means that averaging by integrating over $\mathbf{x}_2$  while holding $\mathbf{x}_1$ fixed may be viewed as \emph{Lagrangian averaging} in this situation. 
\end{remark}

The decoupled equations (\ref{BigWhirls-mot})--(\ref{SmallWhirls-mot}) comprise a simple non-interaction representation of two-scale Euler equations. Namely, equation (\ref{BigWhirls-mot}) has reduced to Euler's equation for the velocity $\mathbf{u}_1$ of the resolved scale motion (big whirls). And the velocity $\mathbf{u}_2$ for the subgrid scale motion (little whirls) evolving in equation (\ref{SmallWhirls-mot}) is governed by Euler's fluid equation, expressed in the moving frame of the $\mathbf{u}_1$-flow, viewed as a scalar transformation applied to the $\mathbf{x}_1$-dependence of the fluctuation velocity $\mathbf{u}_2$. To describe this situation, we say that the fluctuations are \emph{swept, not slaved} by the large scales. However, the KSA would not be enough of a nonlinear basis to describe turbulence, because there is not yet any back-reaction from the fluctuations to the mean motion. Having set up this kinematic framework, the remainder of the paper deals with modelling further dynamical interactions between the mean flow and the fluctuations.

\section{Formulation of the two-scale model}

\subsection{Hamilton's principle for two scales of motion}
We shall derive a two-scale model of turbulence dynamics by using the relabeling symmetry in Hamilton's principle for ideal fluids
\[
\delta\int_{t_1}^{t_2}L(\boldsymbol\psi,\boldsymbol{\dot{\psi}},\boldsymbol\eta,\boldsymbol{\dot{\eta}})\,\de t=0
\]
where we shall keep in mind the $\left\langle\|\mathbf{u}\|^2_{L^2} \right\rangle_2$ kinetic-energy form in equation (\ref{KinKEavg}),
\[
L(\boldsymbol\psi,\boldsymbol{\dot{\psi}},\boldsymbol\eta,\boldsymbol{\dot{\eta}})=\frac12\int\left|\boldsymbol{\dot{\psi}}(\mathbf{x}_1^{(0)})\right|^2 \de^3 {x}_1^{(0)}+\frac12\iint \left|\boldsymbol{\dot{\eta}}_{\boldsymbol\psi(\mathbf{x}_1^{(0)})}(\mathbf{x}_2^{(0)})\right|^2\de^3 x_2^{(0)}\,\de^3 {x}_1^{(0)}.
\]
Upon using the relabeling symmetry, this particular Lagrangian becomes
\begin{align*}
L(\boldsymbol\psi,\boldsymbol{\dot{\psi}},\boldsymbol\eta,\boldsymbol{\dot{\eta}})=\,&\tilde{\ell}(\boldsymbol\psi,\boldsymbol{\dot{\psi}},\boldsymbol{\dot{\eta}}\circ{\boldsymbol\eta}^{-1})
\\
=\,&\frac12\int\left|\boldsymbol{\dot{\psi}}(\mathbf{x}_1^{(0)})\right|^2 \de^3 \mathbf{x}_1^{(0)}+
\frac12\iint \left|\mathbf{\tilde{u}}_2\big(\boldsymbol\psi(\mathbf{x}_1^{(0)}),\mathbf{x}_2\big)\right|^2\de^3{x}_2\,\de^3 {x}_1^{(0)}
\end{align*}
where $\mathbf{\tilde{u}}_2(\boldsymbol\psi(\mathbf{x}_1^{(0)}),\mathbf{x}_2):=\boldsymbol{\dot{\eta}}_{\boldsymbol\psi(\mathbf{x}_1^{(0)})}\circ{\boldsymbol\eta}_{\boldsymbol\psi(\mathbf{x}_1^{(0)})}^{-1}(\mathbf{x}_2)$.
One then finds that
\begin{align}
\tilde{\ell}(\boldsymbol\psi,\boldsymbol{\dot{\psi}},\mathbf{\tilde{u}}_2)&
=
{\ell}(\boldsymbol{\dot{\psi}}\circ\boldsymbol\psi^{-1},\mathbf{\tilde{u}}_2\circ\boldsymbol\psi)
\nonumber
\\
&=
\frac12\int\left|\mathbf{u}_1(\mathbf{x}_1)\right|^2 \de^3 \mathbf{x}_1+
\frac12\iint \left|{\mathbf{u}}_2\big(\mathbf{x}_1,\mathbf{x}_2\big)\right|^2\de^3{x}_2\,\de^3{x}_1
\label{L2-lagrangian}
\end{align}
as in (\ref{KinKEavg}), where $\mathbf{u}_1(\mathbf{x}_1)=\boldsymbol{\dot{\psi}}\circ\boldsymbol\psi^{-1}(\mathbf{x}_1)$ and $\mathbf{u}_2(\mathbf{x}_1,\mathbf{x}_2)=\mathbf{\tilde{u}}_2(\boldsymbol\psi(\mathbf{x}_1),\mathbf{x}_2)$.

At this point, one may take variations and apply Hamilton's principle in its general form
\[
\delta\int_{t_1}^{t_2}\ell(\mathbf{u}_1,\mathbf{u}_2)\,\de t=0\,,
\]
for an arbitrary Lagrangian, $\ell(\mathbf{u}_1,\mathbf{u}_2)$ arising from relabeling symmetry arguments as above.
The variations ${\delta{\mathbf{u}}_1}$ and ${\delta{\mathbf{u}}_2}$ are computed in Appendix \ref{velocity-vars}  and they produce the following dynamics:
\begin{align}
&\frac{\partial}{\partial t}\frac{\delta\ell}{\delta{\mathbf{u}}_1}
+
\mathbf{u}_1\cdot\nabla_1\frac{\delta\ell}{\delta{\mathbf{u}}_1}
+
(\nabla_1\mathbf{u}_1)^T\cdot\frac{\delta\ell}{\delta{\mathbf{u}}_1}
+
\int\! \left(\nabla_1\mathbf{u}_2\right)^{\rm T}\cdot \frac{\delta\ell}{\delta{\mathbf{u}}_2}\,\de^3x_2
=
-
\nabla_1\mathsf{p}_1
\label{M1-eqn}
\\
&\frac{\partial}{\partial t}\frac{\delta\ell}{\delta{\mathbf{u}}_2}
+
\frac{\partial}{\partial x_1^j} \left(u_1^j \frac{\delta\ell}{\delta{\mathbf{u}}_2}\right)
+
\mathbf{u}_2\cdot\nabla_2\frac{\delta\ell}{\delta{\mathbf{u}}_2}
+
(\nabla_2\mathbf{u}_2)^T\cdot\frac{\delta\ell}{\delta{\mathbf{u}}_2}
=
-\nabla_2\mathsf{p}_2
\label{M2-eqn}
\end{align}
where we denote $(\nabla\mathbf{u})^T\cdot\mathbf{v}=v_j\nabla u^j$ for a vector $\mathbf{u}$ and a co-vector $\mathbf{v}$. Equivalently, upon defining 
\begin{align*}
\mathbf{m}_1(x_1):=\frac{\delta\ell}{\delta{\mathbf{u}}_1}
,\quad 
\mathbf{m}_2(x_1,x_2):=\frac{\delta\ell}{\delta{\mathbf{u}}_2}
,\quad
\frac{D}{D t_1}:=\frac{\partial}{\partial t} +  \mathbf{u}_1\cdot\nabla_1
\,,\end{align*}
we may rewrite equations (\ref{M1-eqn})--(\ref{M2-eqn}) equivalently as
\begin{align}
&\frac{D}{D t_1}\mathbf{m}_1
+
 \left(\nabla_1\mathbf{u}_1\right)^{\rm T}\cdot\mathbf{m}_1
+
\underbrace{\
\int\! \left(\nabla_1\mathbf{u}_2\right)^{\rm T}\cdot \mathbf{m}_2\,\de^3x_2}
_{\hbox{\it ${\rm Div}_1$(Reynolds stress)}}
=
-
\nabla_1\mathsf{p}_1
\label{m1-eqn}
\\
&
\frac{D}{D t_1}\mathbf{m}_2
-
\underbrace{\
\mathbf{u}_2\times{\rm curl}_2\,\mathbf{m}_2\
}_{\hbox{\it Nonlinearity}}
=
-\nabla_2\mathsf{p}_2
\label{m2-eqn}
\end{align}
with ${\rm div}_1\mathbf{u}_1=0$ and  ${\rm div}_2\mathbf{u}_2=0$, by construction. Upon specializing to the $\left\langle\|\mathbf{u}\|^2_{L^2} \right\rangle_2$ averaged Lagrangian \eqref{L2-lagrangian}, one finds $\mathbf{m}_1=\mathbf{u}_1$ and $\mathbf{m}_2=\mathbf{u}_2$ thereby recovering equations \eqref{BigWhirls-mot} and \eqref{SmallWhirls-mot}. In this case, the Reynolds stress term in equation (\ref{m1-eqn}) becomes a gradient $\nabla_1\int \frac12|\mathbf{u}_2|^2d^3x_2 $, which may be absorbed into the pressure gradient.

\begin{remark}[Momentum conservation laws]$\,$\rm

Equations (\ref{m1-eqn})--(\ref{m2-eqn}) imply the following conservation law for the advection of the total fluctuation momentum at a point $\mathbf{x}_1$ by the mean flow,
\begin{align}
\frac{D}{D t_1}\int \mathbf{m}_2(\mathbf{x}_1,\mathbf{x}_2) \,\de^3x_2 = 0
\,.
\label{m2-conserv}
\end{align}
For the kinetic energy in (\ref{L2-lagrangian}) equation (\ref{m2-conserv}) recovers the formula (\ref{mean-preserv}) for the preservation of the mean. In general, this momentum conservation law replaces the preservation of the mean in equation (\ref{mean-preserv}). 

Because $\mathbf{m}_2(\mathbf{x}_1,\mathbf{x}_2):={\delta\ell}/{\delta{\mathbf{u}}_2}$ is a variational derivative of an $\mathbf{x}_1$-translation invariant Lagrangian, it is always possible to differentiate by parts and rewrite the last term in (\ref{M1-eqn}) as the divergence of a stress tensor, 
\[
\int\! \left(\nabla_1\mathbf{u}_2\right)^{\rm T}\cdot \frac{\delta\ell}{\delta{\mathbf{u}}_2}\,\de^3x_2
=
\nabla_1\cdot {\sf R}_1
\,.\]
Consequently, equation (\ref{M1-eqn}) will also conserve the total $\mathbf{x}_1$-momentum,
\[
\frac{d}{dt}\int \frac{\delta\ell}{\delta{\mathbf{u}}_1} \de^3x_1 = 0
\,.\]
Likewise, equation (\ref{m2-conserv}) may be interpreted as a conservation law for the total fluctuation momentum
\[
\frac{d}{dt}\iint \frac{\delta\ell}{\delta{\mathbf{u}}_2}  \de^3x_2\,\de^3x_1 = 0
\,,\]
whose preservation arises from Noether's theorem by $\mathbf{x}_2$-translation invariance. 

\end{remark}

System (\ref{m1-eqn})--(\ref{m2-eqn}) conserves energy and momentum, and is a Lie-Poisson Hamiltonian system, whose Hamiltonian function and Lie-Poisson bracket are given in Appendix \ref{HamStructure}.

\subsection{Evolution of fluctuation labels} \label{full-trans}
In order to compare our results with previous literature, it may be useful to compute the evolution equation for the fluctuation label $\boldsymbol\eta_{\bx_1}(\bx_2^{(0)})$ and compare it with equation \eqref{Adv-microstructure}. This task can be easily accomplished by substituting the Lagrange-to-Euler map:
\begin{equation}
\label{LtoE-map}
\frac{\delta \ell}{\delta \mathbf{u}_2}(\mathbf{x}_1,\mathbf{x}_2,t)=
\iint\!
\frac{\delta \ell}{\delta \mathbf{u}_2}(\mathbf{x}_1^{(0)},\mathbf{x}_2^{(0)},0)\ 
\delta\!\left(\mathbf{x}_2-\boldsymbol\eta_{\boldsymbol\psi(\mathbf{x}_1^{(0)\!\!},\,t)}(\mathbf{x}_2^{(0)\!},t)\right)\,
\delta(\mathbf{x}_1-\boldsymbol\psi(\mathbf{x}_1^{(0)\!},t))\,
\de^3{x}_2^{(0)}\,\de^3{x}_1^{(0)}
\end{equation}
into equation \eqref{M2-eqn}, or equivalently into equation \eqref{m2-eqn}. Upon denoting $\bm_2={\delta \ell}/{\delta \mathbf{u}_2}$ for simplicity, we may pair equation \eqref{m2-eqn} with a divergence-less test function and obtain the following equation of motion
\begin{equation}
\label{KSA-equation}
\frac{\partial}{\partial t}\boldsymbol\eta_{\bx_1}(\bx_2^{(0)\!},t)
+
\underbrace{
\bu_1(\bx_1,t)\cdot\nabla_1\boldsymbol\eta_{\bx_1}(\bx_2^{(0)\!},t)
}
_
\textit{transport along the mean flow}
=
\underbrace{
\bu_2(\boldsymbol\eta_{\bx_1}(\bx_2^{(0)\!},t),\bx_1,t)
}
_\textit{micromotion of fluctuations}
\end{equation}
where we recall $\bu_1=\dot{\boldsymbol\psi}\circ\boldsymbol\psi^{-1}$ and the notation is such that $\bx_1=\boldsymbol\psi(\bx_1^{(0)\!},t)$ and $\boldsymbol\eta_{\bx_1}(\bx_2^{(0)\!},0)=\bx_2^{(0)}$. At this point, the comparison with equation \eqref{Adv-microstructure} is immediate. Indeed, one readily concludes that the important difference between \eqref{Adv-microstructure} and \eqref{KSA-equation} is that the second allows for more freedom in the fluctuation labels, whose micromotion is encoded in the fluctuation velocity term $\bu_2(\boldsymbol\eta_{\bx_1}(\bx_2^{(0)\!},t),\bx_1,t)$ on the right hand side. Dropping this micromotion term yields precisely \eqref{Adv-microstructure}, which assumes that the internal structure 
associated to fluctuations is completely frozen into the mean flow, 
moving with velocity $\bu_1$. Actually, one can say that equation \eqref{KSA-equation} embodies the essence of the present theory by showing explicitly how this differs from previous works in this subject. Notice that the equation \eqref{KSA-equation} is totally equivalent to \eqref{Fluctuations}, although \eqref{KSA-equation} is  more suggestive since the right hand side provides a direct link with the Eulerian velocity $\bu_2$ appearing in the equations of motion \eqref{M1-eqn}--\eqref{M2-eqn}.

\section{Conservation of total SGS circulation and helicity}\label{SGS-circ+hel}

\subsection{Kelvin-Noether circulation theorem}
As shown in the previous section, the evolution of the fluid momentum $\delta\ell/\delta\mathbf{u}_2$ in terms of its initial value is reconstructed by using the Lagrange-to-Euler map \eqref{LtoE-map}. Upon dropping explicit time dependence for convenience, the latter can also be written by pulling back the momentum evolution as follows:
\[
\frac{\delta \ell}{\delta \mathbf{u}_2}(\mathbf{x}_1^{(0)},\mathbf{x}_2^{(0)},0)
=
\iint\!
\frac{\delta \ell}{\delta \mathbf{u}_2}(\mathbf{x}_1,\mathbf{x}_2,t)\ 
\delta\!\left(\mathbf{x}_2^{(0)}-\boldsymbol\eta_{\boldsymbol\psi(\mathbf{x}_1^{(0)})}^{-1}(\mathbf{x}_2)\right)\,
\delta(\mathbf{x}_1^{(0)}-\boldsymbol\psi^{-1}(\mathbf{x}_1))\,
\de^3{x}_2\,\de^3{x}_1
\]
so that \emph{Noether's theorem} reads,
\[
\frac{\de}{\de t} \iint\!
\frac{\delta \ell}{\delta \mathbf{u}_2}(\mathbf{x}_1,\mathbf{x}_2,t)\ 
\delta\!\left(\mathbf{x}_2^{(0)}-\boldsymbol\eta_{\boldsymbol\psi(\mathbf{x}_1^{(0)})}^{-1}(\mathbf{x}_2)\right)\,
\delta(\mathbf{x}_1^{(0)}-\boldsymbol\psi^{-1}(\mathbf{x}_1))\,
\de^3{x}_2\,\de^3{x}_1=0
\,,\]
where expanding the time derivative yields the equation of motion for $\delta\ell/\delta\mathbf{u}_2$.

At this point, one can fix a loop $\gamma_0$ and take the integral
\begin{align*}
&\ \frac{\de}{\de t}\oint_{\gamma_0}\frac{\delta \ell}{\delta \mathbf{u}_2}(\mathbf{x}_1^{(0)},\mathbf{x}_2^{(0)},0)\cdot\de{\mathbf{x}_2^{(0)}}
\\
=&\ 
\frac{\de}{\de t}\oint_{\gamma_0\!}\left(\iint\!
\frac{\delta \ell}{\delta \mathbf{u}_2}(\mathbf{x}_1,\mathbf{x}_2,t)\ 
\delta\!\left(\mathbf{x}_2^{(0)}-\boldsymbol\eta_{\boldsymbol\psi(\mathbf{x}_1^{(0)})}^{-1}(\mathbf{x}_2)\right)\,
\delta(\mathbf{x}_1^{(0)}-\boldsymbol\psi^{-1}(\mathbf{x}_1))\,
\de^3{x}_2\,\de^3{x}_1\right)\!\cdot\de{\mathbf{x}_2^{(0)}}
\end{align*}
to obtain the following Kelvin-Noether theorem.
\begin{theorem}[Kelvin fluctuation-circulation density]
At each point $\mathbf{x}_1$, consider a loop $\gamma(\mathbf{u}_2)$ moving with the fluctuation velocity $\mathbf{u}_2(\mathbf{x}_1,\mathbf{x}_2)={\dot{\boldsymbol\eta}_{\mathbf{x}_1}\circ\boldsymbol\eta^{-1}_{\mathbf{x}_1}({\mathbf{x}_2})}$. Then, the following transport dynamics holds:
\begin{align}
\frac{D\mathcal{K}}{D t_1}=0
\quad\hbox{where}\quad
\mathcal{K}(\mathbf{x}_1,t) := \oint_{\gamma(\mathbf{u}_2)}
\frac{\delta \ell}{\delta \mathbf{u}_2}(\mathbf{x}_1,\mathbf{x}_2)\cdot\de\mathbf{x}_2
\,.
\label{KNthm-eqn}
\end{align}
\end{theorem}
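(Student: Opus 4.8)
The plan is to establish a transport theorem for the fluctuation-circulation density and then to use the equation of motion (\ref{m2-eqn}) to show that the integrand of the resulting loop integral is an exact differential in $\mathbf{x}_2$. First I would recast (\ref{m2-eqn}) in Lie-derivative form. Using the vector identity $(\mathbf{u}_2\cdot\nabla_2)\mathbf{m}_2 + (\nabla_2\mathbf{u}_2)^{\mathrm{T}}\!\cdot\mathbf{m}_2 = \nabla_2(\mathbf{u}_2\cdot\mathbf{m}_2) - \mathbf{u}_2\times\mathrm{curl}_2\,\mathbf{m}_2$ together with the incompressibility $\mathrm{div}_2\mathbf{u}_2=0$, the nonlinear term $-\mathbf{u}_2\times\mathrm{curl}_2\,\mathbf{m}_2$ equals $\mathcal{L}_{\mathbf{u}_2}(\mathbf{m}_2\cdot\de\mathbf{x}_2) - \nabla_2(\mathbf{u}_2\cdot\mathbf{m}_2)$, where $\mathcal{L}_{\mathbf{u}_2}$ is the Lie derivative acting on the fluctuation momentum one-form. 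Substituting into (\ref{m2-eqn}) rewrites it as
\[
\frac{D}{Dt_1}\mathbf{m}_2 + \mathcal{L}_{\mathbf{u}_2}\mathbf{m}_2 = \nabla_2\!\left(\mathbf{u}_2\cdot\mathbf{m}_2 - \mathsf{p}_2\right),
\]
whose right-hand side is a pure gradient in $\mathbf{x}_2$ and is therefore annihilated by integration around any closed $\mathbf{x}_2$-loop.

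The second step is the transport theorem itself. I would realise the moving loop as the image $\gamma(\mathbf{u}_2)=\boldsymbol\eta_{\mathbf{x}_1}(\gamma_0)$ of a fixed loop $\gamma_0$ in the fluctuation label space, with $\mathbf{x}_1=\boldsymbol\psi(\mathbf{x}_1^{(0)},t)$, and pull the circulation back to $\gamma_0$ via the change of variables $\mathbf{x}_2=\boldsymbol\eta_{\mathbf{x}_1}(\mathbf{x}_2^{(0)})$, so that $\mathcal{K}(\mathbf{x}_1,t)=\oint_{\gamma_0}\mathbf{m}_2\bigl(\mathbf{x}_1,\boldsymbol\eta_{\mathbf{x}_1}(\mathbf{x}_2^{(0)}),t\bigr)\cdot(\partial\boldsymbol\eta_{\mathbf{x}_1}/\partial\mathbf{x}_2^{(0)})\,\de\mathbf{x}_2^{(0)}$. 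Because $\gamma_0$ is held fixed and the operator $D/Dt_1=\partial_t+\mathbf{u}_1\cdot\nabla_1$ is exactly differentiation at fixed mean label $\mathbf{x}_1^{(0)}$, I can carry $D/Dt_1$ under the integral sign. The decisive ingredient is the KSA evolution equation (\ref{KSA-equation}) (equivalently (\ref{Fluctuations})), which gives $\frac{D}{Dt_1}\boldsymbol\eta_{\mathbf{x}_1}(\mathbf{x}_2^{(0)}) = \mathbf{u}_2(\boldsymbol\eta_{\mathbf{x}_1}(\mathbf{x}_2^{(0)}),\mathbf{x}_1)$; this guarantees that the loop genuinely moves with the fluctuation velocity under the mean material derivative, so that differentiating the tangent factor $\partial\boldsymbol\eta_{\mathbf{x}_1}/\partial\mathbf{x}_2^{(0)}$ and the argument of $\mathbf{m}_2$ reproduces precisely the Lie-derivative combination. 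The outcome is the transport identity
\[
\frac{D\mathcal{K}}{Dt_1} = \oint_{\gamma(\mathbf{u}_2)}\left(\frac{D}{Dt_1}\mathbf{m}_2 + \mathcal{L}_{\mathbf{u}_2}\mathbf{m}_2\right)\cdot\de\mathbf{x}_2.
\]

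Finally, inserting the recast equation of motion from the first step, the integrand becomes the gradient $\nabla_2(\mathbf{u}_2\cdot\mathbf{m}_2-\mathsf{p}_2)$, whose integral around the closed loop $\gamma(\mathbf{u}_2)$ vanishes, yielding $D\mathcal{K}/Dt_1=0$. The main obstacle is the middle step: justifying that differentiating the pulled-back one-form with the mean material derivative $D/Dt_1$, while the label loop $\gamma_0$ stays fixed, really produces the Lie-derivative term $\mathcal{L}_{\mathbf{u}_2}\mathbf{m}_2$ despite the double dependence of $\boldsymbol\eta_{\mathbf{x}_1}$ on $t$ (through $\partial_t$) and on $\mathbf{x}_1=\boldsymbol\psi(\mathbf{x}_1^{(0)},t)$ (through $\mathbf{u}_1\cdot\nabla_1$); this is exactly where (\ref{KSA-equation}) is indispensable, as it fuses the two contributions into the single loop velocity $\mathbf{u}_2$. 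An equivalent and perhaps more economical route, already prepared in the paragraphs preceding the statement, is to note that this same pullback sends $\mathbf{m}_2\cdot\de\mathbf{x}_2$ to the time-independent initial one-form on $\gamma_0$, so that $\mathcal{K}$ equals the fixed label-space circulation $\oint_{\gamma_0}\mathbf{m}_2(\mathbf{x}_1^{(0)},\mathbf{x}_2^{(0)},0)\cdot\de\mathbf{x}_2^{(0)}$, which $D/Dt_1$ (differentiation at fixed $\mathbf{x}_1^{(0)}$) annihilates directly by Noether's theorem.
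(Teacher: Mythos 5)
Your proposal is correct, and your primary route is genuinely different from the paper's. The paper's proof (Appendix \ref{KNthm-proof}) is a pure pullback/Noether argument: since the Lagrange-to-Euler map \eqref{LtoE-map} realizes $\delta\ell/\delta\mathbf{u}_2$ at time $t$ as the push-forward $\boldsymbol\psi_*(\boldsymbol\psi^*\boldsymbol\eta)_*$ of its initial value (modulo $\mathbf{x}_2$-gradients, i.e.\ after pairing with divergence-free test functions), the circulation pulled back to the fixed label loop $\gamma_0$ is manifestly time-independent, and simply expanding the time derivative through the $\boldsymbol\psi$-pullback converts $\de/\de t$ at fixed labels into $D/Dt_1=\partial_t+\mathbf{u}_1\cdot\nabla_1$ acting on $\mathcal{K}$, giving the result wholesale without ever differentiating under the integral. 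Your main argument is instead the classical direct computation: a loop-transport identity $D\mathcal{K}/Dt_1=\oint_{\gamma(\mathbf{u}_2)}\bigl(D\mathbf{m}_2/Dt_1+\pounds_{\mathbf{u}_2}\mathbf{m}_2\bigr)\cdot\de\mathbf{x}_2$, justified via the KSA evolution \eqref{KSA-equation}, followed by rewriting \eqref{m2-eqn} as $D\mathbf{m}_2/Dt_1+\pounds_{\mathbf{u}_2}\mathbf{m}_2=\nabla_2\left(\mathbf{u}_2\cdot\mathbf{m}_2-\mathsf{p}_2\right)$, whose exact right-hand side is annihilated by the closed loop. Both steps check out: your vector identity is correct (and does not in fact need $\mathrm{div}_2\,\mathbf{u}_2=0$), and you rightly flag that \eqref{KSA-equation}, equivalently \eqref{Fluctuations}, is what fuses the $\partial_t$ and $\mathbf{u}_1\cdot\nabla_1$ contributions from the double time dependence of $\boldsymbol\eta_{\mathbf{x}_1}$ into the single loop velocity $\mathbf{u}_2$, which is exactly where differentiating the tangent factor produces the $(\nabla_2\mathbf{u}_2)^{\rm T}\cdot\mathbf{m}_2$ piece of the Lie derivative. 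As for what each approach buys: your direct route makes visible where exactness enters, showing the theorem holds for an arbitrary Lagrangian precisely because the $\nabla_2$-gradient (pressure plus kinetic term) is invisible to closed loops, whereas the paper's route obtains the conservation law as a naked consequence of relabeling symmetry, at the price of burying the pressure bookkeeping inside the pullback relation for the momentum. Your closing remark---that the same pullback sends $\mathbf{m}_2\cdot\de\mathbf{x}_2$ to the time-independent initial one-form on $\gamma_0$, so that $D/Dt_1$ annihilates $\mathcal{K}$ directly by Noether's theorem---is precisely the paper's proof, so you have in effect supplied both arguments, with the direct one worked out in detail.
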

The proof can be found in Appendix \ref{KNthm-proof}. 
\begin{remark}\rm
Spatial integration over $\mathbf{x}_1$ in (\ref{KNthm-eqn}) yields conservation of the total fluctuation-circulation
\begin{align}
\frac{\de}{\de t}\int\mathcal{K}\, \de^3x_1=0
\,,
\label{conserv-fluctcirc}
\end{align}
upon using $\nabla_1\cdot\mathbf{u}_1=0$.
\end{remark}

\subsection{Vorticity and helicity density}
Applying the corresponding curl operations to equations (\ref{m1-eqn}) and (\ref{m2-eqn}) yields,
upon defining 
\[
\boldsymbol{\omega}_1 = \nabla_1\times\mathbf{m}_1
\quad\hbox{and}\quad
\boldsymbol{\omega}_2 = \nabla_2\times\mathbf{m}_2
\,,\]
the following vorticity dynamics
\begin{align}
&\frac{\partial}{\partial t}\boldsymbol{\omega}_1
- \nabla_1\times (\mathbf{u}_1\times\boldsymbol{\omega}_1)
+
\nabla_1\times
\int\! \left(\nabla_1\mathbf{u}_2\right)^{\rm T}\cdot \mathbf{m}_2\,\de^3x_2
=
0
\label{omega1-eqn}
\\
&
\frac{D}{D t_1}\boldsymbol{\omega}_2
- \nabla_2\times (\mathbf{u}_2\times\boldsymbol{\omega}_2)
= 0
\,.\label{omega2-eqn}
\end{align}
Consequently, the \emph{fluctuation helicity density}
\begin{equation}
\mathcal{H}(\mathbf{x}_1,t)=\int\!\mathbf{m}_2\cdot\operatorname{curl}\mathbf{m}_2\,\de^3x_2
\label{H-def}
\end{equation}
satisfies the transport equation
\begin{equation}
\frac{D\mathcal{H}}{D t_1}=0
\,.
\label{H-advect}
\end{equation}
This calculation implies the following.

\begin{theorem}\rm
The total fluctuation helicity is preserved,
\[
\frac{d}{dt}\int \mathcal{H}\,\de^3x_1=0\,.
\]
\end{theorem}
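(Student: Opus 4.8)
The plan is to derive the global conservation law directly from the pointwise advection identity \eqref{H-advect}, which I take as given, together with the single additional ingredient of mean-flow incompressibility, $\nabla_1\cdot\mathbf{u}_1=0$. This mirrors exactly the reasoning already used for the total fluctuation circulation in \eqref{conserv-fluctcirc}.

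First I would write out the material derivative in \eqref{H-advect} explicitly,
\[
\frac{\partial}{\partial t}\mathcal{H}(\mathbf{x}_1,t)+\mathbf{u}_1(\mathbf{x}_1,t)\cdot\nabla_1\mathcal{H}(\mathbf{x}_1,t)=0\,,
\]
and integrate this relation over the periodic cell $\mathbf{x}_1\in[0,1]^3$. Because $\mathbf{u}_1$ depends only on $\mathbf{x}_1$ and the domain is fixed in time, the first term contributes $\tfrac{d}{dt}\!\int\mathcal{H}\,\de^3x_1$. The key step is to convert the advective term into conservation form using $\nabla_1\cdot\mathbf{u}_1=0$,
\[
\mathbf{u}_1\cdot\nabla_1\mathcal{H}=\nabla_1\cdot\!\left(\mathcal{H}\,\mathbf{u}_1\right)\,,
\]
whose integral over the cell is the flux of $\mathcal{H}\mathbf{u}_1$ through the boundary. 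By periodicity of $\mathcal{H}$ and $\mathbf{u}_1$ on $[0,1]^3$ (equivalently, cancellation on opposite faces via the divergence theorem), this flux vanishes, and combining the two terms gives $\tfrac{d}{dt}\!\int\mathcal{H}\,\de^3x_1=0$.

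The computation is routine once \eqref{H-advect} is in hand; the genuinely substantive content lies upstream, in establishing that the fluctuation helicity density is advected as a scalar along the mean flow. That transport law follows from the vorticity dynamics \eqref{omega1-eqn}--\eqref{omega2-eqn} by the standard helicity cancellation: writing $\boldsymbol{\omega}_2=\nabla_2\times\mathbf{m}_2$ and differentiating $\mathcal{H}$ using \eqref{m2-eqn} and \eqref{omega2-eqn}, the self-orthogonal term $(\mathbf{u}_2\times\boldsymbol{\omega}_2)\cdot\boldsymbol{\omega}_2$ vanishes pointwise, while the pressure term $\nabla_2\mathsf{p}_2\cdot\boldsymbol{\omega}_2$ and the stretching term integrate to zero over $\de^3x_2$ after integration by parts, using $\nabla_2\cdot\boldsymbol{\omega}_2=0$ and $\mathbf{x}_2$-periodicity. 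The one point requiring care is the interchange of $\frac{D}{D t_1}$ with the $\mathbf{x}_2$-integration defining $\mathcal{H}$, which is legitimate precisely because $\mathbf{u}_1$ is independent of $\mathbf{x}_2$. The hard part, therefore, is not the final integration but ensuring that every boundary term — on the $\mathbf{x}_2$-cell in deriving \eqref{H-advect} and on the $\mathbf{x}_1$-cell in the last step — genuinely vanishes under the assumed periodic boundary conditions; once this is checked, the result is immediate.
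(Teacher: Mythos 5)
Your proof is correct and takes essentially the same route as the paper's: integrate the advection law \eqref{H-advect} over the periodic cell, rewrite $\mathbf{u}_1\cdot\nabla_1\mathcal{H}=\nabla_1\cdot(\mathcal{H}\,\mathbf{u}_1)$ via $\nabla_1\cdot\mathbf{u}_1=0$, and let the divergence term vanish under periodic boundary conditions. Your supplementary sketch of how \eqref{H-advect} itself follows from \eqref{m2-eqn} and \eqref{omega2-eqn} — the pointwise vanishing of $(\mathbf{u}_2\times\boldsymbol{\omega}_2)\cdot\boldsymbol{\omega}_2$, the $\mathbf{x}_2$-integrations by parts, and the interchange of $\frac{D}{Dt_1}$ with the $\mathbf{x}_2$-integral since $\mathbf{u}_1$ is independent of $\mathbf{x}_2$ — is also sound, though the paper treats that transport law as already established before the theorem is stated.
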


\begin{proof}
The proof proceeds as a direct calculation,
\[
\frac{d}{dt}\int \mathcal{H}\,\de^3x_1
=
\int \frac{\partial}{\partial t} \mathcal{H}\,\de^3x_1 
=
- \int \nabla_1(\mathcal{H}\mathbf{u}_1)\,\de^3x_1 
=0
\]
for periodic boundary conditions.
\end{proof}

\section{Resolved-scale models based on SGS advection}

\subsection{A resolved-scale model that is structurally similar to MHD}
The advection laws (\ref{KNthm-eqn}) and (\ref{H-advect}) for the total SGS circulation $\mathcal{K}$ and helicity $\mathcal{H}$, respectively, suggest replacing the SGS degrees of freedom with these Lagrangian-averaged quantities obtained from $\mathbf{x}_2$-integration.
Hamilton's principle then takes the form
\[
\delta\int_{t_1}^{t_2}\ell(\mathbf{u}_1,\mathcal{H},\mathcal{K})\,\de t=0
\,.
\]
This variational principle results in  the following equations of motion for $\mathbf{m}_1 := {\delta \ell}/{\delta \mathbf{u}_1}$, $\mathcal{H}$ and $\mathcal{K}$,
\begin{align}
&\frac{D\mathbf{m}_1}{D t_1}
+
 \left(\nabla_1\mathbf{u}_1\right)^{\rm T}\cdot\mathbf{m}_1
+
\frac{\delta \ell}{\delta \mathcal{H}}\nabla_1 \mathcal{H}
+
\frac{\delta \ell}{\delta \mathcal{K}}\nabla_1 \mathcal{K}
=
-
\nabla_1\mathsf{p}_1
\label{BigWhirls-motHK}
\\
&
\frac{D\mathcal{H}}{D t_1}=0
\,,\qquad
\frac{D\mathcal{K}}{D t_1}=0
\,.
\label{SmallWhirls-motHK}
\end{align}
An illustrative special case can be found by choosing the Lagrangian
\begin{align}
\ell(\mathbf{u}_1,\mathcal{H},\mathcal{K})
=
\frac12\int\left|\mathbf{u}_1(\mathbf{x}_1)\right|^2
-
\frac12 |\nabla_1 \mathcal{H}\times\nabla_1 \mathcal{K}|^2
\,\de^3{x}_1
\,.
\label{MHD-Lagrangian}
\end{align}
Upon defining $\mathbf{B}:=\nabla_1 \mathcal{H}\times\nabla_1 \mathcal{K}$, the corresponding equations (\ref{BigWhirls-motHK}) and (\ref{SmallWhirls-motHK}) arising from Hamilton's principle for this Lagrangian are
\begin{align}
&\frac{D\mathbf{u}_1}{D t_1}
+  \mathbf{B} \times {\rm curl}_1 \mathbf{B}
=
-
\nabla_1\mathsf{p}_1
\,,
\qquad \nabla_1\cdot \mathbf{u}_1 = 0
\,,
\label{MHD-mot}\\
&
\frac{\partial \mathbf{B}}{\partial t}
=
{\rm curl}_1( \mathbf{u}_1 \times \mathbf{B})
\,,
\qquad \nabla_1\cdot \mathbf{B} = 0
\,.
\label{MHD-B}
\end{align}
These are \emph{identical} to the equations of ideal incompressible magnetohydrodynamics (MHD). 
However, in these equations, the quantity $\mathbf{B}$ is the cross product of the $\mathbf{x}_1$-gradients of the $\mathbf{x}_2$-integrated SGS  helicity $\mathcal{H}$ and circulation densities $\mathcal{K}$, defined in equations (\ref{H-def}) and (\ref{KNthm-eqn}), respectively. The equations in (\ref{MHD-B}) imply the relation 
\begin{align}
\frac{D\mathbf{B}}{Dt_1} = \mathbf{B}\cdot\nabla_1\bu_1
\,,\quad\hbox{so}\quad
\Big[\frac{D}{Dt_1}\,,\,\mathbf{B}\cdot\nabla_1\Big]=0
\,,
\label{MHD-B2}
\end{align}
where $[\,\cdot\,,\,\cdot\,]$ is the commutator of divergence-free vector fields. Consequently, one finds the relation
\begin{align}
 \frac{D^2\mathbf{B}}{Dt_1^2}
 = \frac{D}{Dt_1}(\mathbf{B}\cdot\nabla_1 \bu_1) 
 = -\,\mathbf{B}\cdot\nabla_1\Big(\nabla_1\left(\mathsf{p}_1 + \tfrac12|\mathbf{B}|^2\right) 
 - \mathbf{B}\cdot\nabla_1\mathbf{B}\Big)
\,.
\label{MHD-B3}
\end{align}

\begin{remark}\rm
The two formulas (\ref{MHD-B2}) and (\ref{MHD-B3}) correspond to Ertel's theorem \cite{Er1942} and Ohkitani's relation \cite{Oh1993} for Euler's equations with vorticity $\boldsymbol{\omega}=\nabla\times \bu$, respectively,
\begin{align}
\Big[\frac{D}{Dt}\,,\,\boldsymbol{\omega}\cdot\nabla\Big]=0
\quad\hbox{and}\quad
 \frac{D^2\boldsymbol{\omega}}{Dt^2}
 = \frac{D}{Dt}(\boldsymbol{\omega}\cdot\nabla \bu) 
 = -\,\boldsymbol{\omega}\cdot\nabla\nabla\mathsf{p}
\,.
\label{MHD-EO}
\end{align}
\end{remark}

\begin{remark}\rm
Equation (\ref{MHD-B3}) may be rearranged into the form of a \emph{nonlinear wave equation}
\begin{align}
 \frac{D^2\mathbf{B}}{Dt_1^2}
 -
\Big( \mathbf{B}\cdot\nabla_1\Big)^2\mathbf{B}
 = -\,\mathbf{B}\cdot\nabla_1\nabla_1\left(\mathsf{p}_1 + \tfrac12|\mathbf{B}|^2\right) 
\,.
\label{MHD-B4}
\end{align}
We have shown that the SGS advection laws (\ref{KNthm-eqn}) and (\ref{H-advect}) for the integrated SGS circulation $\mathcal{K}$ and helicity  $\mathcal{H}$ impart a certain elasticity to the resolved scale equations, which is completely analogous to Alfv\'en waves in MHD. The magnitude $|\mathbf{B}|$ is not preserved by this flow. In fact, equation (\ref{MHD-B2}) implies 
\begin{align}
\frac12\frac{D|\mathbf{B}|^2}{Dt_1} = \mathbf{B}\cdot\left(\nabla_1\mathbf{u}_1\right)\cdot\mathbf{B}
\label{MHD-magB}
\end{align}
for the evolution of the magnitude. 
\end{remark}

\begin{remark}\rm
If the Lagrangian in (\ref{MHD-Lagrangian}) were actually relevant in turbulence, the result (\ref{MHD-B4}) would have vast implications for the spatiotemporal properties of turbulence. In particular, the cross product of resolved gradients $\nabla_1\mathcal{K}$ and $\nabla_1\mathcal{H}$ of the integrated SGS circulation and helicity would propagate as a wave. This is a degree of freedom not usually considered in turbulence models. However, we have treated the Lagrangian (\ref{MHD-Lagrangian}) here just for illustration. We do not expect it to actually describe turbulence. We have discussed it only to show that an energetic dependence on the resolved gradients of the integrated advected properties of the subgrid scales could lead to interesting dynamical behaviour in the resolved scales, by imparting a type of potential energy that would lead to a type of non-Newtonian behaviour. The concept of non-Newtonian properties of turbulence has a long history of investigation, going back at least to \cite{Ri1957}. More recently, the related concept of nonlinear dispersion in turbulence has also been studied intensely, see \cite{FoHoTi2001,FoHoTi2002}. However, as far as we are aware, the closest analog in the previous literature of this concept of elasticity of integrated SGS properties is due to \cite{MP1997}, who derived similar equations from a different approach, based on stretching of SGS vortices.
A derivation of a class of stretched-vortex SGS models similar to those discussed in \cite{MP1997} will be provided using a variational principle in the next section. 
\end{remark}

\begin{remark}\rm
The two formulas (\ref{MHD-B2}) and (\ref{MHD-B3}) correspond to Ertel's theorem and Ohkitani's relation \cite{Oh1993} for Euler's equations with vorticity $\boldsymbol{\omega}=\nabla\times \bu$, respectively,
\begin{align}
\Big[\frac{D}{Dt}\,,\,\boldsymbol{\omega}\cdot\nabla\Big]=0
\quad\hbox{and}\quad
 \frac{D^2\boldsymbol{\omega}}{Dt^2}
 = \frac{D}{Dt}(\boldsymbol{\omega}\cdot\nabla \bu) 
 = -\,\boldsymbol{\omega}\cdot\nabla\nabla\mathsf{p}
\,.
\label{MHD-EO}
\end{align}
\end{remark}

\subsection{Convection in the stretched SGS vortex model 2 of \cite{MP1997}}

In this section, we introduce the constrained kinetic energy Lagrangian
\begin{align}
\ell 
=
\int
\frac{D}{2}|\bu_1|^2 - p(D-1) - q\, (|\mathbf{B}|^2 - 1 )\,d\,^3x_1
\,,\label{MP-Lag}
\end{align}
with $\mathbf{B}:=\nabla_1 \mathcal{H}\times\nabla_1 \mathcal{K}$ and $Dd\,^3x_1$ the preserved volume element.  The Lagrange multiplier $p$ (the pressure) imposes volume preservation. The Lagrange multiplier $q$ imposes $|\mathbf{B}|^2 = 1$, so that, as in \cite{MP1997}, we may think of the subgrid scale order parameter as a vortex filament whose strength is constant, but whose direction varies. 
The Euler-Poincar\'e motion equation for this Lagrangian is obtained as in \cite{HMR1998} and found to be
\begin{align}
\frac{D\bu_1}{Dt_1}
=
-\nabla_1 p - q \nabla_1|\mathbf{B}|^2 
-
\underbrace{\
 \nabla_1\cdot 2q\,\left({\rm Id} - \frac{\mathbf{B}\otimes\mathbf{B}}{|\mathbf{B}|^2}\right)
}_{\hbox{MP97 SGS stress form}}
.\label{MP-EPeqns}
\end{align}
The motion equation (\ref{MP-EPeqns}) \emph{conserves} the kinetic energy $\frac12\int |\bu_1|^2\,d\,^3x_1$. This means that, unlike most SGS models of turbulence, the subgrid scale vortices in the present model do not dissipate resolved-scale kinetic energy. Instead, all dissipation of kinetic energy at the resolved scales must be modelled separately; for example, as a viscous term that could be added later at the $\mathbf{x}_1$-scale. Models of dissipation are not part of our discussion here. 

The last expression in equation (\ref{MP-EPeqns}) has the same form as the stress term in equation (25) of \cite{MP1997}, although the corresponding expression there does not transform properly as a tensor. Instead, the expression for the stress tensor there transforms under change of variables as an array of scalars. This means it neglects the convection of the subgrid vortices by the resolved field, as discussed explicitly in  \cite{MP1997}. It also means that the stretched SGS vortex model 2 of \cite{MP1997} is not variational. 
The present variational model also contains auxiliary equations for preservation of volume and advection of $\mathbf{B}$ by the resolved field, with constant magnitude,
\begin{align}
\frac{\partial D}{\partial t}
=
-\,{\rm div}_1\,(D\bu_1)
\,,\qquad
\frac{\partial \mathbf{B}} {\partial  t}
=
{\rm curl}_1\,(\bu_1\times\mathbf{B})
\,, \qquad  |\mathbf{B}|^2=1
\,.
\label{MP-aux}
\end{align}
The Lagrange multipliers $p$ and $q$ obey a
system of linear equations found by preservation
of initial conditions $D=1$ and $|\mathbf{B}|^2 = 1$, which require, respectively
\begin{align}
-
\frac{1}{D}\frac{D }{Dt_1}D
=
\nabla_1\cdot\bu_1=0
\quad\hbox{and}\quad
\frac12\frac{D|\mathbf{B}|^2}{Dt_1} 
=
\mathbf{B}\cdot\left(\nabla_1\mathbf{u}_1\right)\cdot\mathbf{B}= 0
\quad\hbox{with}\quad
\nabla_1\cdot\mathbf{B}=0
\,.\label{MP-constraints}
\end{align}
\paragraph{Summary}
\begin{enumerate}[(i)]
\item
When $q$ is interpreted spectrally as $2q(t,\bx_1)=:\int_{k_c(t,\bx_1)}^\infty E(k)dk$ for a cut-off wavenumber $k_c(t,\bx_1)$, equations (\ref{MP-EPeqns})--(\ref{MP-constraints}) represent a variant of the MP97 model 2 in which we have restored the convection of the subgrid vortices by
the resolved field that was neglected in \cite{MP1997}.
\item
When $2q\equiv1$, this model reduces to ideal incompressible MHD, in which case ${D|\bB|^2}/{Dt_1}\ne0$, by equation (\ref{MHD-magB}). 
\end{enumerate}

%%%%%%%%%%%
%%%%%%%%%%%
\paragraph{Ertel Theorem and Ohkitani relation for the present variant of MP97 model 2.}
The present variant of the MP97 model 2 motion equation (\ref{MP-EPeqns}) may be written as 
\begin{align}
\frac{D\bu_1}{Dt_1}
=
-\nabla_1 (p+2q) + \nabla_1\cdot 2q\,\mathbf{B}\otimes\mathbf{B}
=:
\bF
\,,
\label{MP97-mot}
\end{align}
where preservation of $\nabla_1\cdot\bu_1=0$ and $\mathbf{B}\cdot \nabla_1\bu_1 \cdot \mathbf{B} = 0$
determines $p$ and $q$, and
\begin{align}
\frac{\partial \mathbf{B}} {\partial  t}
=
{\rm curl}_1\,(\bu_1\times\mathbf{B})
\,, \quad  |\mathbf{B}|^2=1
\quad\hbox{with}\quad
\nabla_1\cdot\mathbf{B}=0
\,.\label{MP97-mag}
\end{align}
The Ertel Theorem and Ohkitani relations for this variant of MP97 model 2 are then
\begin{align}
\Big[\frac{D}{Dt_1}\,,\,\mathbf{B}\cdot\nabla_1\Big]=0
\,,\hbox{ for }
\frac{D\mathbf{B}}{Dt_1} = \mathbf{B}\cdot\nabla_1\bu_1
\hbox{ and }
 \frac{D^2\mathbf{B}}{Dt_1^2}
 = \frac{D}{Dt_1}(\mathbf{B}\cdot\nabla_1 \bu_1) 
 = \mathbf{B}\cdot\nabla_1\bF
\label{MP97-EO}
\end{align}
Together, the Ertel and Ohkitani relations conveniently deliver
\pone
\[
\frac{D}{Dt_1}(\mathbf{B}\cdot\nabla_1 \bu_1 \cdot\mathbf{B})
=
\mathbf{B}\cdot\nabla_1\bF\cdot\mathbf{B}
+
|\mathbf{B}\cdot\nabla_1\bu_1|^2
\]
which must vanish in order to preserve the constraint that $\mathbf{B}^2=1$.

%%%%%%%%%%%
\paragraph{The equation system for Lagrange multipliers $p$ and $q$.}
Preservation of $\nabla_1\cdot\bu_1=0$ and $\mathbf{B}\cdot \nabla_1\bu_1 \cdot
\mathbf{B} = 0$ determines the Lagrange multipliers $p$ and $q$ from the following system
\begin{align}
\begin{split}
&\frac{\partial \,}{\partial t}(\nabla_1\cdot\bu_1)
=
-\,|\nabla_1\bu_1|^2 + \nabla_1\cdot\bF =0
\,,\\
&\frac{D}{Dt_1}(\mathbf{B}\cdot\nabla_1 \bu_1 \cdot\mathbf{B})
=
\mathbf{B}\cdot\nabla_1\bF\cdot\mathbf{B}
+
|\mathbf{B}\cdot\nabla_1\bu_1|^2 =0
\,,\\
&\hbox{in which}\quad
\bF
:=
-\nabla_1 (p+2q) + \nabla_1\cdot (2q\,\mathbf{B}\otimes\mathbf{B})
\,.\end{split}
\label{qp-system}
\end{align}
That is, the force $\bF$ for this variational version of MP97 model 2 depends linearly on  $p$ and $q$ as in (\ref{MP97-mot}). An additional $q$-term enters the boundary conditions for this system. Namely, the normal component of the force $\bF$ must vanish on a fixed, flat boundary; that is, $\bF\cdot\mathbf{\hat{n}}=0$.

\begin{remark}\rm
See the papers \cite{MP1997,VPC2000} for discussions of numerical implementations of the stretched-vortex subgrid-stress model, as well as discussions of its applications to forced and decaying turbulence, and studies of its numerical complexity relative to standard Large Eddy Simulation (LES) models such as the Smagarinsky model. The present variational version of the stretched-vortex subgrid-stress model restores convection of the subgrid vortices by the resolved field. However, it has the additional complexity that the system of equations (\ref{qp-system}) for $p$ and $q$ must also be solved at each time step. The solvability of this system has not been studied yet and such a study would be beyond the scope of the present work. 
\end{remark}

\section{SGS fluid model with feedback between the scales}

\subsection{Energy coupling of fluctuations and stress tensor}

In this section, we shall return to the multiscale description and consider the particular Lagrangian
\begin{align}
\ell(\mathbf{u}_1,\mathbf{u}_2)
&=
\frac12\int\left|\mathbf{u}_1(\mathbf{x}_1)\right|^2 \de^3 \mathbf{x}_1
+
\frac12\iint \Big(\left|{\mathbf{u}}_2\big(\mathbf{x}_1,\mathbf{x}_2\big)\right|^2
+\hspace{-2mm}
\underbrace{\
\alpha_1^2\left|\nabla_1{\mathbf{u}}_2\right|^2
}_{\hbox{Coupling term}}\hspace{-2mm}
\Big)
\de^3\mathbf{x}_2\,\de^3 \mathbf{x}_1
\,,\label{BasicLag}
\end{align}
where $\alpha_1$ is an appropriately chosen \emph{coupling constant} with the dimensions of \emph{length}.
In this case, the mean-fluctuation interaction energy is the $L^2$ norm of the $\mathbf{x}_1$-gradient of the slowly varying envelope of the fluctuation velocity $\mathbf{u}_2(\mathbf{x}_1,\mathbf{x}_2)$. 
The momenta $\mathbf{m}_1$ and $\mathbf{m}_2$ in equations (\ref{m1-eqn})--(\ref{m2-eqn}) are given in terms of the velocities $\mathbf{u}_1$ and $\mathbf{u}_2$ for this Lagrangian as
\[\mathbf{m}_1=\mathbf{u}_1
\quad\hbox{and}\quad
\mathbf{m}_2=(1-\alpha_1^2\Delta_1)\mathbf{u}_2
\,,\quad\hbox{hence}\quad
\mathbf{u}_2 = (1-\alpha_1^2\Delta_1)^{-1}\mathbf{m}_2
\,,\]
where $\Delta_1$ is the Laplacian operator in the $\mathbf{x}_1$ coordinates, so that $(1-\alpha_1^2\Delta_1)^{-1}$ is a smoothing operator. 
Consequently, the coupling term in (\ref{BasicLag}) introduces a \emph{Reynolds stress} term that will remain and affect the evolution, as follows.
\begin{align}
&\frac{D}{D t_1}\mathbf{u}_1
-
\underbrace{\
\alpha_1^2 \nabla_1\cdot
\int\! \left(\nabla_1\mathbf{u}_2^{\rm T}\cdot \nabla_1\mathbf{u}_2
- {\rm Id}|\nabla_1\mathbf{u}_2|^2\right) \,\de^3x_2\
}
_{\hbox{\it (Divergence of Reynolds stress, ${\rm div}_1 {\sf R}_1$)}}
=
-
\nabla_1\mathsf{p}_1
\label{newm1-eqn}
\\
&
\frac{D}{D t_1}(\mathbf{u}_2-\alpha_1^2\Delta_1\mathbf{u}_2)
-
\underbrace{\
\mathbf{u}_2\times{\rm curl}_2\,(\mathbf{u}_2-\alpha_1^2\Delta_1\mathbf{u}_2)\
}_{\hbox{\it (Nonlinear convection)}}
=
-\nabla_2\mathsf{p}_2
\,.
\label{newm2-eqn}
\end{align}
Upon introducing the subgrid Reynolds stress notation ${\sf R}_2$, the latter equation becomes
\begin{align}
&\frac{D}{D t_1}(\mathbf{u}_2-\alpha_1^2\Delta_1\mathbf{u}_2)
+
\hspace{-4mm}
\underbrace{\
\nabla_2\cdot 
\Big({\sf R}_2 - {\rm Id}\,\mathsf{p}_2\Big)
}_{\hbox{\it (Subgrid Reynolds stress)}}
\hspace{-4mm}
=
0
\label{newm2-eqn-div}
\end{align}
where the term 
\begin{align}
 -\, \mathbf{u}_2\times{\rm curl}_2\,(\mathbf{u}_2-\alpha_1^2\Delta_1\mathbf{u}_2)
= \nabla_2\cdot  {\sf R}_2-\mathbf{u}_2 (\nabla_2\cdot \mathbf{u}_2)
\label{R2-stresstensor}
\end{align}
has zero mean for $\nabla_2\cdot\mathbf{u}_2=0$ and the transformation to $\nabla_2\cdot  {\sf R}_2$ stress-divergence form in (\ref{newm2-eqn-div}) represents momentum conservation, which arises from Noether's theorem for the $\mathbf{x}_2$-translation invariant Lagrangian in (\ref{BasicLag}).

\begin{enumerate}
\item
{\bf Reynolds stress.}
The only channel of feedback from small scales to the larger ones arises from the \emph{Reynolds stress} term in equation (\ref{newm1-eqn}).  The Reynolds stress tensor 
\begin{align}
{\sf R}_1=\alpha^2_1\int\!\nabla_1\mathbf{u}_2^{\rm T}\cdot \nabla_1\mathbf{u}_2 \,\de^3x_2
\quad\hbox{with components}\quad
{\sf R}_1^{ij}
=\alpha^2_1\int\! \frac{\partial u_2^i}{\partial x_1^k}\,\frac{\partial u_2^j}{\partial x_1^k} \,\de^3x_2
\label{Reynolds-stress}
\end{align}
in equation (\ref{newm1-eqn}) is remarkably similar to the tensor diffusivity stress tensor for large-eddy simulation (LES) introduced  in \cite{Le1974} and discussed in detail in \cite{WWVJ2001}. It exchanges energy between the smaller scales and the larger scales along the $L^2$ mean primary stretching direction of $\nabla_1\mathbf{u}_2$, and the direction of this exchange depends on the gradient of the slowly varying envelope of the fluctuation velocity.
There is, however, one very important difference between the Reynolds stress in (\ref{Reynolds-stress}) and the corresponding term in an LES model: our model does not yet take viscous effects into account. Indeed the system (\ref{newm1-eqn})--(\ref{newm2-eqn}) conserves the kinetic energy expressed by the Lagrangian in (\ref{BasicLag}), and even comprises a Lie-Poisson Hamiltonian system, as discussed in Appendix \ref{HamStructure}.

The Reynolds stress term in our system represents backscatter due to nonlinear dispersion, not diffusion. Thus, as with the Euler-alpha model \cite{HMR1998,FoHoTi2002} the higher order terms that make the solutions more regular do so without introducing dissipation.  

\item
{\bf SGS circulation and helicity advection.} Note that the present model retains the same advection laws (\ref{KNthm-eqn}) and (\ref{H-advect}) for the total SGS circulation and helicity as in Section \ref{SGS-circ+hel}.  Therefore, the corresponding resolved-scales exist and may be studied for this choice of the multiscale Lagrangian, as well. 

\item
{\bf Taylor hypothesis.}
The Taylor hypothesis for the small scales assumes that \emph{their nonlinearity does not affect their evolution}; so that they are passively swept by the larger scales. Applying the Taylor hypothesis would neglect the term marked \emph{Nonlinear convection} in equation (\ref{newm2-eqn}), or \emph{Subgrid Reynolds stress} in equation (\ref{newm2-eqn-div}). Even if it eliminated the nonlinear convection effects of the fluctuation velocity, the Taylor hypothesis would not eliminate the effects of the back-reaction of the small scales due to their sweeping by the large scales, which still remain in the term labelled as \emph{Reynolds stress} in equation  (\ref{newm1-eqn}).  

\item
{\bf Regularity of the equations.}
Equation (\ref{newm2-eqn}) is reminiscent of the Navier-Stokes-alpha model \cite{FoHoTi2002}, but with smoothing applied to the envelope of the fluctuations in a frame moving with the mean flow. In fact, equation (\ref{newm2-eqn}) is essentially a two-scale version of the Euler-alpha model \cite{HMR1998} in a frame moving with the $\mathbf{x}_1$-flow.

The Lagrangian (\ref{BasicLag}) could be modified further to allow introduction of norms that would be strong enough to ensure long-time existence of its corresponding solutions, even in the absence of viscosity. In the presence of viscosity, the resulting equations are:
\begin{align}
&\frac{D}{D t_1}\mathbf{u}_1
+
{\rm div}_1 {\sf R}_1
=
-
\nabla_1\mathsf{p}_1
+
\nu_1 \Delta_1 \mathbf{u}_1
\,,\label{nu-m1-eqn}
\\
&
\frac{D}{D t_1}(\mathbf{u}_2-\alpha_1^2\Delta_1\mathbf{u}_2)
+
{\rm div}_2 {\sf R}_2
=
-\nabla_2\mathsf{p}_2
+
\nu_2 \Delta_2
(\mathbf{u}_2-\alpha_1^2\Delta_1\mathbf{u}_2)
\,,
\label{nu-m2-eqn}
\end{align}
with  ${\sf R}_2$ and ${\sf R}_1$ given in (\ref{R2-stresstensor}) and (\ref{Reynolds-stress}), respectively.
The flows at both scales are incompressible, so the pressures $p_1$ and $p_2$ in equations (\ref{nu-m1-eqn}) and (\ref{nu-m2-eqn}) are determined from preservation of ${\rm div}_1\mathbf{u}_1=0$ and ${\rm div}_2\mathbf{u}_2=0$, respectively.
The viscosity has been introduced \emph{ad hoc} here, as the diffusion of momentum at the each scale.
Equations (\ref{nu-m1-eqn})--(\ref{nu-m2-eqn}) may admit global strong solutions, as occurs for the Navier-Stokes-alpha model investigated in \cite{FoHoTi2002}. However, their analysis is beyond the scope of the present paper and will be left to the future. 

\item
{\bf Numerical complexity and implementation of multiscale models.}
It is possible that one may be able to reduce the computational complexity of these equations and thereby accelerate their  computation by introducing a type of optimal sampling that would select those smaller scales that make the largest contributions to the Reynolds stress term. 
The multiscale analysis in the present case couples the resolved scale solution with a subgrid cell problem for vortex filament evolution at each point of the resolved scale grid. The computational cost for this coupled system of equations could be quite expensive, although there are some alternatives that might be used to lessen the cost. For example, an adaptive scheme recently has been developed to reduce complexity and speed up the computation in a related case. See \cite{HYR2008} for a discussion of a promising approach for reducing the complexity of such multiscale computations.  See also \cite{EfendHou2009} for an approach to multiscale computations using Finite Element methods.  Yet another possibility might be to adapt the heterogeneous multiscale method \cite{E-etal} that has already been developed for complex fluids to the present case. The difference is that in the present case the subgrid scale dynamics governs an infinite-dimensional vortex filament, rather than a finite-dimensional order parameter, as occurs in liquid crystals and magnetic fluids. 
In any case, the numerical simulations of the equations for the present multiscale model and the development of numerical algorithms for their solution must be left to the future. 
\end{enumerate}

\section{Conclusions}

Using a simple decomposition argument that applies to all periodic functions, the present approach has sought to transfer ideas associated with convection of microstructure in complex fluids into the context of turbulence modeling. The microstructure was interpreted as an order parameter in the same fashion as spin is regarded for ferromagnetic fluids. This interpretation was suggested by previous work in which turbulence microstructure had been assumed to be transported by the mean flow. However, in this paper, the intrinsic nonlinear features of microstructure were also considered so that microstructure underwent its own nonlinear evolution in the frame of the mean flow. In short, the \emph{kinematic sweeping ansatz} (KSA) assumed that the mean flow serves as a Lagrangian frame of motion for the  fluctuation dynamics. This seems to be an effective approximation, in general, and it was contrasted in section \ref{full-trans} with the exact formula. Other assumptions about convection of microstructure would have been possible. However, the KSA has the advantage of possessing a purely Eulerian description, and thereby avoiding the necessity of computing the Lagrangian back-to-labels map. In addition, the geometric framework that was previously developed for complex fluids could be naturally transferred to turbulence, thereby leading, in the present case, to pure transport dynamics for the fluctuation-circulation and the fluctuation-helicity densities.

The resulting model arising from KSA has also the advantage that additional, even finer, scales could also be incorporated by simple extension of the geometric features. In the resulting iterated hierarchy of smaller and smaller scales, the flow associated to a certain scale serves as a Lagrangian coordinate for the flow of the next finer scale, and so on. 

Even in the simplest 2-scale version treated here, the KSA yields dynamical equations that we hope will be suitable for modelling purposes. For example, when the lengthscale $\alpha$ is inserted in Hamilton's principle for smoothness requirements, the resulting model produces Reynold's stress tensor in a way that is quite reminiscent of the diffusivity stress tensor in LES simulations.

Various features of the present model still remain to be addressed. For example, vortex dynamics at the smaller scales could be an interesting future direction for research, in both 2D and 3D. Indeed, although the $\mathbf{u}_1$ equation may lose the vortex filament solution, the $\mathbf{u}_2$ has a vortex filament solution at each Eulerian point $\mathbf{x}_1$, see Appendix \ref{App-vortex}. Hence, projecting onto the plane would yield a proliferation of point vortices at each point in physical space. In addition, the development of numerical integrators that respect the geometric framework here would be a very important advance, both in new mathematics and toward evaluating the present model in applications. For discussions of recent progress in this direction see \cite{GBPav2012}. Finally, we hope that the methods of \cite{E-etal} might also be profitably applied to the present multiscale model of ideal fluid motion. 

\subsection{Acknowledgments}
We thank our friends P. Constantin, C. J. Cotter, C. R. Doering, F. Gay-Balmaz, J. D. Gibbon, J. Pietarila Graham, T. S. Ratiu and B. Wingate for their kind encouragement and thoughtful remarks during the course of this work. DDH gratefully acknowledges partial support by the Royal Society of London's Wolfson Award scheme and the European Research Council's Advanced Grant. 

\appendix

\section{Appendix}

\subsection{Proof of Lemma 1}\label{lem1-proof}

Consider the Fourier series expansion of a periodic function $\mathbf{u}(\mathbf{x})$  on the unit cube $[0, 1]^3$,
\[
\mathbf{u}(\mathbf{x}) = \sum_{\mathbf{k}\in\mathbb{Z}^3}\mathbf{\widehat{u}}(\mathbf{k}) e^{{} i\mathbf{k} \cdot \mathbf{x}}
,\quad
  \mathbf{k}\in\mathbb{Z}^3,
\]
in which $\mathbf{\widehat{u}}(\mathbf{k})$ are the Fourier coefficients. Let $0 \le 1/N < 1$ to be a reference
wavelength, where $N$ is an integer, and one denotes
\[
\Lambda_N = \{\mathbf{k}\in\mathbb{Z}^3; |\mathbf{k}| \le N\}
\quad\hbox{and}\quad
 \Lambda_N' = \mathbb{Z}^3 \backslash \Lambda_N
.\]
Decompose the function $\mathbf{u}$ into two additive parts as follows:
\[
\mathbf{u}(\mathbf{x}) = \mathbf{u}_{1}(\mathbf{x}) + \mathbf{u}_{2}(\mathbf{x}),
\]
where
\begin{align*}
\mathbf{u}_{1}(\mathbf{x}) &= \sum_{\mathbf{k}\in\Lambda_N} \mathbf{\widehat{u}}(\mathbf{k}) e^{{} i\mathbf{k} \cdot \mathbf{x}},
\quad\hbox{and}\quad
\mathbf{u}_{2}(\mathbf{x}) = \sum_{\mathbf{k}\in\Lambda_N'}\mathbf{\widehat{u}}(\mathbf{k}) e^{{} i\mathbf{k} \cdot \mathbf{x}}.
\end{align*}
Rewrite $\mathbf{k} = \mathbf{k}_{1} + N\mathbf{k}_{2}$ where $\mathbf{k}_{1}$ and $\mathbf{k}_{2}$ take integer values in $\mathbb{Z}^3$ with $\mathbf{k}_{1}\in\Lambda_N$, and compute
\begin{align*}
\mathbf{u}_{2} 
 &= \sum_{\mathbf{k}\in\Lambda_N'} \mathbf{\widehat{u}}(\mathbf{k}) e^{{} i\mathbf{k} \cdot \mathbf{x}}
\\&
=
\sum_{(\mathbf{k}_{1} + N\mathbf{k}_{2})\in\Lambda_N'} \mathbf{\widehat{u}}(\mathbf{k}_{1} + N\mathbf{k}_{2}) 
e^{{} i(\mathbf{k}_{1} + N\mathbf{k}_{2})\cdot \mathbf{x}}
\\&
=
\sum_{\mathbf{k}_{2}\ne0}
\left(
 \sum_{\mathbf{k}_{1}\in\Lambda_N} \mathbf{\widehat{u}}(\mathbf{k}_{1} + N\mathbf{k}_{2}) e^{i\mathbf{k}_{1} \cdot \mathbf{x}}
 \right)
 e^{i\mathbf{k}_{2} \cdot N\mathbf{x}}
\\&
 =:
\sum_{\mathbf{k}_{2}\ne0}
\mathbf{\widehat{u}}_{2}(\mathbf{k}_{2}, \mathbf{x})  e^{i\mathbf{k}_{2} \cdot N\mathbf{x}}
 \\&
 =: \mathbf{u}_{2}(\mathbf{x},\mathbf{x}/\epsilon)
\end{align*}
where $\epsilon= 1/N$ and the quantity
\[
\mathbf{\widehat{u}}_{2}(\mathbf{k}_{2}, \mathbf{x}) := \sum_{\mathbf{k}_{1}\in \Lambda_N}
\mathbf{\widehat{u}}(\mathbf{k}_{1} + N\mathbf{k}_{2}) e^{i\mathbf{k}_{1} \cdot \mathbf{x}}
\,,\]
involves only Fourier modes whose wave number is less than $N$ in magnitude. Hence, any periodic function $\mathbf{u}\in\mathbb{R}^3$ may be rewritten as
\[
\mathbf{u}(\mathbf{x}) = \mathbf{u}_{1}(\mathbf{x}) + \mathbf{u}_{2}(\mathbf{x},\mathbf{x}/\epsilon) = \mathbf{u}_1(\mathbf{x}_1) + \mathbf{u}_2(\mathbf{x}_1,\mathbf{x}_2)
\]
where $\mathbf{u}_{2}(\mathbf{x}_1,\mathbf{x}_2)$ is a periodic function in $\mathbf{x}_2$ with mean zero, 
\[
\int  \mathbf{u}_2(\mathbf{x}_1,\mathbf{x}_2) \,d\,^3x_2 = 0\,,
\]
since $\mathbf{k}_{2}\ne0$. Note that the functions $\mathbf{u}_{1}$ and $\mathbf{u}_{2}$ depend on the choice of $\epsilon=1/N$. $\blacksquare$

\begin{remark}
The decomposition
\begin{align*}
\mathbf{u}(\mathbf{x}) &= \mathbf{u}_{1}(\mathbf{x}) + \mathbf{u}_{2}(\mathbf{x},\mathbf{x}/\epsilon)
\\
&= \sum_{\mathbf{k}\in\Lambda_N} \mathbf{\widehat{u}}(\mathbf{k}) e^{{} i\mathbf{k} \cdot \mathbf{x}}
+ 
\sum_{\mathbf{k}_{2}\ne0}
\mathbf{\widehat{u}}_{2}(\mathbf{k}_{2}, \mathbf{x})  e^{i\mathbf{k}_{2} \cdot \mathbf{x}/\epsilon}
\end{align*}
may be regarded as a Fourier-series generalization of the WKB form 
\[
\mathbf{u}(\mathbf{x},\mathbf{x}/\epsilon) = \mathbf{\overline{u}}(\mathbf{x}) 
+
\frac12 \left( \mathbf{a}(\mathbf{x})e^{i\theta(\mathbf{x})/\epsilon}
+  \mathbf{a}^*(\mathbf{x})e^{-i\theta(\mathbf{x})/\epsilon}\right)
\]
This WKB form was used in \cite{GjHo1996} to develop a wave, mean flow interaction theory by applying Lagrangian averaging in Hamilton's principle for rotating, stratified incompressible flow.  The present work is similar in spirit to that previous work. 
\end{remark}

\subsection{Expressions for the velocity variations}\label{velocity-vars}

The Euler-Poincar\'e equations are obtained from Hamilton's principle with 
\begin{align*}
\delta \mathbf{u}_1 &= \dot{\xi}_1 - {\rm ad}_{u_1}\xi_1
\\
\delta \mathbf{u}_2 &= \dot{\xi}_2 - {\rm ad}_{u_2}\xi_2 
+ \pounds_{\mathbf{u}_1}\xi_2   - \pounds_{\xi_1}\mathbf{u}_2
\\
&= \dot{\xi}_2 - {\rm ad}_{u_2}\xi_2  
+ \left(\mathbf{u}_1\cdot\nabla_1\right)\xi_2
- \left({\xi_1}\cdot\nabla_1\right)\mathbf{u}_2 
\end{align*}
where ${\rm div}_1\mathbf{u}_1=0$,  ${\rm div}_2\mathbf{u}_2=0$, $\pounds_\mathbf{w}\mathbf{v}=\mathbf{w}\cdot\nabla_1\mathbf{v}$ denotes the Lie derivative,  and
\[
{\rm ad}_{u_2}\xi_2  =u_2\cdot\nabla_2\xi_2 -\xi_2 \cdot\nabla_2 u_2
\,,
\]
%\begin{align*}
%\left\langle \frac{\delta\ell}{\delta \mathbf{u}_2}\,,\,
%\pounds_{\xi_1}\mathbf{u}_2\right\rangle
%=
%\left\langle \frac{\delta\ell}{\delta \mathbf{u}_2}\,,\,
%\xi_1^k\frac{\partial}{\partial x^1_k}\mathbf{u}_2\right\rangle
%=
%\left\langle\int  (\nabla_1\mathbf{u}_2)^T \cdot \frac{\delta\ell}{\delta \mathbf{u}_2}
%\,\de^3\mathbf{x}_2\,,\,\xi_1 \right\rangle
%=
%-\left\langle \frac{\delta\ell}{\delta \mathbf{u}_2} \diamond \mathbf{u}_2 \,,\,
%\xi_1 \right\rangle
%\end{align*}

In particular, while the first variation follows directly by taking $\delta\mathbf{u}_1=\delta(\dot{\boldsymbol\psi}\boldsymbol\psi^{-1})$ and by defining $\xi_1=\delta{\boldsymbol\psi}\boldsymbol\psi^{-1}$, the second variation follows by the calculation below, which uses the pullback notation:
\begin{align*}
\delta\mathbf{u}_2
&\ =
\delta\left(\boldsymbol\psi^*\left(\left(\frac{\de}{\de t}\left(\boldsymbol\psi^*\boldsymbol\eta\right)\right)\left(\boldsymbol\psi^*\boldsymbol\eta\right)^{-1}\right)\right)
\\
&\ =
\pounds_{\xi_1}{u_2}
+
\boldsymbol\psi^*\left(\left(\frac{\de}{\de t}\!\left(\delta\!\left(\boldsymbol\psi^*\boldsymbol\eta\right)\right)\right)\left(\boldsymbol\psi^*\boldsymbol\eta\right)^{-1}\right)
-
\boldsymbol\psi^*\left(\left(\frac{\de}{\de t}\left(\boldsymbol\psi^*\boldsymbol\eta\right)\right)\left(\boldsymbol\psi^*\boldsymbol\eta\right)^{-1}\delta\!\left(\boldsymbol\psi^*\boldsymbol\eta\right)\left(\boldsymbol\psi^*\boldsymbol\eta\right)^{-1}\right)
\\
&\ =
\pounds_{\xi_1}{u_2}
-
{u_2}\cdot\nabla_2{\xi_2}
+
\dot{\xi_2}
-\pounds_{u_1}\xi_2
+
{\xi_2}\cdot\nabla_2{u_2}
\end{align*}
where we have denoted $\xi_2=\boldsymbol\psi^*\!\left(\left(\delta\!\left(\boldsymbol\psi^*\boldsymbol\eta\right)\right)\left(\boldsymbol\psi^*\boldsymbol\eta\right)^{-1}\right)$.

\subsection{Proof of the Kelvin-Noether theorem}\label{KNthm-proof}
Upon denoting
\[
\left(\frac{\delta \ell}{\delta \mathbf{u}_2}\right)_{\!t}=\frac{\delta \ell}{\delta \mathbf{u}_2}(\mathbf{x}_1,\mathbf{x}_2,t)
\,,\qquad
\left(\frac{\delta \ell}{\delta \mathbf{u}_2}\right)_{\!0}=\frac{\delta \ell}{\delta \mathbf{u}_2}(\mathbf{x}_1^{(0)},\mathbf{x}_2^{(0)},0)
\,,
\] 
and by using the pullback notation, we have
\[
\left(\frac{\delta \ell}{\delta \mathbf{u}_2}\right)_{\!t}=\boldsymbol\psi_*\left(\boldsymbol\psi^*\boldsymbol\eta\right)_*\left(\frac{\delta \ell}{\delta \mathbf{u}_2}\right)_{\!0}
\ \Longrightarrow\ 
\left(\frac{\delta \ell}{\delta \mathbf{u}_2}\right)_{\!0}=\left(\boldsymbol\psi^*\boldsymbol\eta\right)^*\boldsymbol\psi^*\left(\frac{\delta \ell}{\delta \mathbf{u}_2}\right)_{\!t}
\]
Thus, we can take the circulation around a fixed loop $\gamma_0$ in $\Bbb{R}^3_2$
\begin{align*}
0=&\
\frac{\de}{\de t}\oint_{\gamma_0}\left(\boldsymbol\psi^*\boldsymbol\eta\right)^*\boldsymbol\psi^*\left(\frac{\delta \ell}{\delta \mathbf{u}_2}\right)_{\!t}
\\&\
=
\frac{\de}{\de t}\oint_{\widetilde{\gamma}}\boldsymbol\psi^*\left(\frac{\delta \ell}{\delta \mathbf{u}_2}\right)_{\!t}
\\&\
=
\frac{\de}{\de t}\!\left(\boldsymbol\psi^*\!\oint_{\gamma}\left(\frac{\delta \ell}{\delta \mathbf{u}_2}\right)_{\!t\,}\right)
\\&\
=
\boldsymbol\psi^*\!\left(\frac{\partial}{\partial t}\oint_{\gamma}\frac{\delta \ell}{\delta \mathbf{u}_2}\cdot\de\mathbf{x}_2+({\mathbf{u}_1\cdot\nabla_1})\oint_{\gamma}\frac{\delta \ell}{\delta \mathbf{u}_2}\cdot\de\mathbf{x}_2\right)
\end{align*}
where 
\[
\widetilde{\gamma}=\left(\boldsymbol\psi_*\boldsymbol\eta\right)\circ\gamma_0=\boldsymbol\eta_{\boldsymbol\psi(\mathbf{x}_1)}(\gamma_0)
\]
 moves with velocity $\widetilde{\mathbf{u}}_2$ and 
 \[
 \gamma=\boldsymbol\psi_*\widetilde{\gamma}=\boldsymbol\eta_{\boldsymbol\psi^{-1}\circ\boldsymbol\psi(\mathbf{x}_1)}=\boldsymbol\eta_{\mathbf{x}_1}(\gamma_0)
\] 
moves with velocity $\mathbf{u}_2=\boldsymbol\psi^*\widetilde{\mathbf{u}}_2$. Thus, upon applying $\boldsymbol\psi_*$,  we have the Kelvin circulation theorem
\[
\frac{\partial}{\partial t}\oint_{\gamma({\mathbf{u}}_2)}\frac{\delta \ell}{\delta \mathbf{u}_2}\cdot\de\mathbf{x}_2+({\mathbf{u}_1\cdot\nabla_1})\oint_{\gamma({\mathbf{u}}_2)}\frac{\delta \ell}{\delta \mathbf{u}_2}\cdot\de\mathbf{x}_2=0
\,.\ \blacksquare
\]
\bigskip 

\subsection{Hamiltonian structure}\label{HamStructure}

Upon defining $\mathbf{m}_i=\delta \ell/\delta \mathbf{u}_i$, one writes the functional Legendre transformation
\[
h(\mathbf{m}_1,\mathbf{m}_2)=\int\!\mathbf{m}_1\cdot\mathbf{u}_1\,\de^3 x_1+\iint\!\mathbf{m}_2\cdot\mathbf{u}_2\,\de^3 x_2\,\de^3 x_1-\ell(\mathbf{u}_1,\mathbf{u}_2)
\,,
\]
so that the equations of motion \eqref{M1-eqn} and \eqref{M2-eqn} read as
\begin{align}
&\frac{\partial\mathbf{m}_1}{\partial t}
+
\frac{\delta h}{\delta{\mathbf{m}}_1}\cdot\nabla_1\mathbf{m}_1
+
\left(\nabla_1\frac{\delta h}{\delta{\mathbf{m}}_1}\right)\cdot\mathbf{m}_1
+
\int\!\left(\nabla_1\frac{\delta h}{\delta{\mathbf{u}}_2}\right)^{\!\!\rm T\!}\cdot \mathbf{m}_2\,\de^3x_2
=
-
\nabla_1\mathsf{p}_1
\label{M1-Ham-eqn}
\\
&\frac{\partial\mathbf{m}_2}{\partial t}
+
\frac{\partial}{\partial x_1^j} \left(\frac{\delta h}{\delta{u}_1^j}\, \mathbf{m}_2\right)
+
\frac{\delta h}{\delta{\mathbf{m}}_2}\cdot\nabla_2\mathbf{m}_2
+
\left(\nabla_2\frac{\delta h}{\delta{\mathbf{m}}_2}\right)^{\!T\!}\cdot\mathbf{m}_2
=
-\nabla_2\mathsf{p}_2
\,.
\label{M2-Ham-eqn}
\end{align}
Therefore, by standard methods \cite{HMR1998}, the Poisson bracket for the above system is
\begin{multline*}
\{f,h\}=\int\!\mathbf{m}_1\cdot\left[\frac{\delta f}{\delta{\mathbf{m}}_1},\frac{\delta h}{\delta{\mathbf{m}}_1}\right]_1\de^3x_1+\iint\!\mathbf{m}_2\cdot\left[\frac{\delta f}{\delta{\mathbf{m}}_2},\frac{\delta h}{\delta{\mathbf{m}}_2}\right]_2\de^3x_2\,\de^3x_1
\\
+\iint{\mathbf{m}_2}
\cdot
\left(
\frac{\delta h}{\delta{\mathbf{m}}_1}\cdot\nabla_1\frac{\delta f}{\delta{\mathbf{m}}_2}
-
\frac{\delta f}{\delta{\mathbf{m}}_1}\cdot\nabla_1\frac{\delta h}{\delta{\mathbf{m}}_2}
\right)
\de^3x_2\,\de^3x_1
\end{multline*}
where we have denoted the Lie algebra brackets as
\[
\left[\mathbf{v},\mathbf{w}\right]_i=\mathbf{w}\cdot\nabla_i\mathbf{v}-\mathbf{v}\cdot\nabla_i\mathbf{w}
\,,
\]
for $i=1,2$.

\subsection{Vortex structures in the subgrid scales}\label{App-vortex}

This appendix shows how equations \eqref{newm1-eqn}-\eqref{newm2-eqn} allow for vortex structures in the fluctuation dynamics. In order to see how this happens, we can write \eqref{m1-eqn}-\eqref{m2-eqn} in terms of the vorticities
\[
\boldsymbol\omega_1=\nabla_1\times\bm_1
\,,
\qquad
\boldsymbol\omega_2=\nabla_2\times\bm_2
\]
which yields \eqref{omega1-eqn}-\eqref{omega2-eqn} with
\begin{align}\label{diamond}
\int\! \left(\nabla_1\mathbf{u}_2\right)^{\rm T}\cdot \mathbf{m}_2\,\de^3x_2
=
-\int\! \left(\nabla_1\boldsymbol\omega_2\right)^{\rm T}\cdot\boldsymbol\phi_2 \,\de^3x_2
\end{align}
where $\boldsymbol\phi_2$ is the vector potential associated to $\bu_2$, so that $\bu_2=\nabla_2\times\boldsymbol\phi_2$.

Notice that equation \eqref{omega2-eqn} possesses a vortex solution of the type
\[
\boldsymbol\omega_2(\bx_1,\bx_2,t)=\int\!\partial_s\mathbf{R}(\bx_1,s,t)\,\delta(\bx_2-\mathbf{R}(\bx_1,s,t))\,\de s
\,,
\]
so that a vortex filament in the fluctuation vorticity is attached to each point of the mean fluid.

In turn, the above vortex solution can be used to reduce the level of difficulty of the equations \eqref{m1-eqn}-\eqref{m2-eqn}. Indeed, replacing the above solution into \eqref{diamond} eliminates the integral over $x_2$ thereby yielding
\[
-\int\! \left(\nabla_1\boldsymbol\omega_2\right)^{\rm T}\cdot\boldsymbol\phi_2 \,\de^3x_2
=
-\left.\int\!\left(\boldsymbol\phi_2 \cdot\nabla_1\mathbf{R}'
-\mathbf{R}'\,\nabla_2\boldsymbol\phi_2:\nabla_1\mathbf{R}\right)
\,\de s\right|_{\bx_2=\mathbf{R}(\bx_1,s,t)}
\]
where $\mathbf{R}'=\partial_s\mathbf{R}$ and $A:B=A_{ij}B_{ij}$. The above term shows how the subgrid vortex determines the strain tensor of the mean flow.
On the other hand, the dynamics of the vortex filament can be derived by pairing equation \eqref{omega2-eqn} with a test vector field, thereby yielding
\[
(\partial_t+\bu_1\cdot\nabla_1)\mathbf{R}
=\left.\nabla_2\times\boldsymbol\phi_2 \right|_{\bx_2=\mathbf{R}(\bx_1,s,t)}
= \mathbf{u}_2(\bx_1, \mathbf{R}(\bx_1,s,t),t)
\,,
\]
which shows how the mean velocity affects the subgrid vortex via the material time derivative.


\begin{thebibliography} {}

\bibitem[CFHOTW1998]{CFHOTW1998}
S. Chen, C. Foias, D. D. Holm, E. J. Olson, E. S. Titi and S. Wynne, 
The Camassa-Holm equations as a closure model for
turbulent channel and pipe flows,
{\it Phys. Rev. Lett.}, {\bf 81} (1998) 5338-5341,
\url{http://xxx.lanl.gov/abs/chao-dyn/9804026}.

\bibitem[CFHOTW1999]{CFHOTW1999}
S. Chen, C. Foias, D. D. Holm, E. J. Olson, E. S. Titi and S. Wynne, 
A connection between the Camassa-Holm equations and turbulence
in pipes and channels, 
{\it Phys. Fluids}, {\bf11} (1999) 2343-2353,
\url{http://xxx.lanl.gov/abs/chao-dyn/9903033}.

\bibitem
[EE03]{EE03}
W. E and B. Engquist: Multiscale Modeling and Computation,
{\it Notices of the Amer. Math. Soc.}, 50:9, 1062-1070 (2003).

\bibitem[EH2009]{EfendHou2009} 
Y. Efendiev and T. Y. Hou,
{\it Multiscale Finite Element Methods Theory and Applications} 
Surveys and Tutorials in the Applied Mathematical Sciences Vol 4,
Springer (2009).

\bibitem[HMM]{E-etal}
W. E, B. Engquist, X. Li, W. Ren and E. Vanden-Eijnden, 
The Heterogeneous Multiscale Method: A Review. {\it Commun. Comput. Phys.} 
{\bf 2}, No. 3, 367-450 (2007).

\bibitem[HMR1998]{HoMaRa1998}  
D. D. Holm, J. E. Marsden and T. S. Ratiu,
The Euler--Poincar\'e equations and semidirect products
with applications to continuum theories,
{\it Adv. in Math.}, {\bf 137} (1998) 1-81,
\url{http://xxx.lanl.gov/abs/chao-dyn/9801015}.

\bibitem[Hou2005]{Hou2005} 
T. Y. Hou, Multiscale modelling and computation of fluid flow,
{\it International Journal for Numerical Methods in Fluids}, 
Special Issue: 8th ICFD Conference on Numerical Methods for Fluid Dynamics
{\bf 47}, 707-719 (2005).

\bibitem[Er1942]{Er1942}
H. Ertel, Ein Neuer Hydrodynamischer Wirbelsatz, {\it Meteorol. Z.} {\bf 59} 271-81 (1942).

\bibitem[FHT2001]{FoHoTi2001} 
C. Foias, D. D. Holm and E. S. Titi,
The Navier-Stokes-alpha model of fluid turbulence, 
{\it Physica D} {\bf 152} (2001) 505-519,
\url{http://xxx.lanl.gov/abs/nlin.CD/0103037}.

\bibitem[FHT2002]{FoHoTi2002} 
C. Foias, D. D. Holm and E. S. Titi,
The three dimensional viscous Camassa-Holm equations, and
their relation to the Navier-Stokes equations and turbulence theory,
{\it J. Dyn. and Diff. Eqns.} {\bf 14} (2002) 1-35.
\url{http://xxx.lanl.gov/abs/nlin.CD/0103039}.

\bibitem[GBPav2012]{GBPav2012}  F. Gay-Balmaz and D Pavlov, in preparation, (2012).

\bibitem[GBTRPH]{GBTRPH}  D.D. Holm, Euler-Poincar\'e dynamics of perfect complex fluids, {\it in: P. Holmes,
P. Newton, A. Weinstein (Eds.), Geometry, Dynamics and Mechanics: 60th
Birthday Volume for J.E. Marsden, Springer-Verlag, 2002}. F. Gay-Balmaz and T. Ratiu, The Geometric Structure of Complex Fluids. {\it Adv. Appl. Math.}, {\bf 42} (2) (2008) 176-275. D.D. Holm, V. Putkaradze, C. Tronci, Geometric gradient-flow dynamics with singular solutions. {\it Physica D}, {\bf 237}, n. 22, (2008)  2952-2965.

\bibitem[GH1996]{GjHo1996} 
I. Gjaja and D. D. Holm, 
Self-consistent wave-mean flow interaction
dynamics and its Hamiltonian formulation for a rotating
stratified incompressible fluid,
{\it Physica D}, {\bf 98}, 343-378  (1996). 

\bibitem[Ho2005]{Ho2005}  
D. D. Holm, Taylor's Hypothesis, Hamilton's Principle, and
the LANS-$\alpha$ Model for Computing Turbulence.
{\it Science-Based Prediction for Complex Systems},
N. Cooper (ed.). {\it Los Alamos Science} {\bf29}, (2005) 172-180.

\bibitem[HMR1998]{HMR1998} 
D. D. Holm, J. E. Marsden and T. S. Ratiu,
The Euler--Poincar\'e equations and semidirect products
with applications to continuum theories,
{\it Adv. in Math.}, {\bf 137}, 1-81 (1998).

\bibitem[HT2009]{HT2009}
D. D. Holm and C. Tronci, 
Geodesic flows on semidirect-product Lie groups: geometry of singular measure-valued solutions,
{\it Proc. R. Soc. A}, 465: 335-366 (2009).

\bibitem[HYR2008]{HYR2008} 
T. Y. Hou, D. Yang and H. Ran,
Multiscale analysis and computation for the three-dimensional incompressible Navier-Stokes equations,
{\it Multiscale Model. Simul.}, {\bf 6}, 1317-1346 (2008).

\bibitem[Le1974]{Le1974} 
A. Leonard, Energy cascade in large-eddy simulations of turbulent fluid flows, 
{\it Adv. Geophys.} {\bf 18}, 237-248 (1974)

\bibitem[MPP1985]{MPP1985} D. W. McLaughlin, G. C. Papanicolaou, and O. Pironneau, 
Convection of microstructure and related problems,
{\it SIAM J. Applied Math}, {\bf 45}, 780-797 (1985).

\bibitem[MP1997]{MP1997} 
A. Misra and D. I. Pullin,
A vortex-based subgrid stress model for large-eddy simulation,
{\it Phys. Fluids} {\bf 9}, 2443-2454 (1997).

\bibitem[Oh1993]{Oh1993} 
K. Ohkitani, Eigenvalue problems in three-dimensional Euler flows, {\it Phys. Fluids A} {\bf 5} 2570-2 (1993).

\bibitem[PS2008]{PS2008} G. A. Pavliotis and A. Stuart, 
\emph{Multiscale Methods: Averaging and Homogenization}.
Springer Series: Texts in Applied Mathematics, Vol. 53 (2008), 310 p.

\bibitem[Ri1957]{Ri1957} 
R.S. Rivlin, The relation between the flow of non-Newtonian fluids and turbulent Newtonian fluids, 
{\it Q. Appl. Math.} {\bf 15}, 212-215 (1957).

\bibitem[Tr2010]{Tr2010} Tronci, Hamiltonian approach to hybrid plasma models,
{\it  J. Phys. A: Math. Theor.} {\bf 43} 375501 (2010).

\bibitem[WWVJ2001]{WWVJ2001}
G. S. Winckelmans, A. A. Wray, O. V. Vasilyev, and H. Jeanmart
Explicit-filtering large-eddy simulation using the tensor-diffusivity model
supplemented by a dynamic Smagorinsky term, {\it Phys. Fluids} {\bf 13}, 1385 (2001).

\bibitem[DzVo1980]{DzVo1980}
Dzyaloshinskii, I.E., Volovick, G.E., Poisson brackets in
condensed matter systems, {\it Ann. Phys.} {\bf 125}, 67-97 (1980).

\bibitem[VPC2000]{VPC2000}
T. Voelkl, D. I. Pullin and D. C. Chan, A physical-space version of the stretched-vortex subgrid-stress model for large-eddy simulation, {\it Phys. Fluids}, {\bf 12}, 1810-1825 (2000).

\end{thebibliography}
\end{document}